  \providecommand\BibTeX{{%
    \normalfont B\kern-0.5em{\scshape i\kern-0.25em b}\kern-0.8em\TeX}}}
\renewcommand\footnotetextcopyrightpermission[1]{} 
\def\K{\ensuremath{\mathbb{K}}}
\def\Kbar {\ensuremath{\overline{\mathbb{K}}}}
\DeclareBoldMathCommand{\c}{c}
\DeclareBoldMathCommand{\a}{a}
\DeclareBoldMathCommand{\f}{f}
\DeclareBoldMathCommand{\g}{g}
\DeclareBoldMathCommand{\h}{P}
\DeclareBoldMathCommand{\x}{x}
\DeclareBoldMathCommand{\z}{z}
\DeclareBoldMathCommand{\v}{v}
\DeclareBoldMathCommand{\u}{u}
\DeclareBoldMathCommand{\e}{u}
\DeclareBoldMathCommand{\R}{\mathcal{R}}
\DeclareBoldMathCommand{\S}{\mathcal{S}}
\def\ZZ {\ensuremath{\mathbb{Z}}}
\def\b_eta{\mbox{\boldmath$\eta$}}
\def\softO{\ensuremath{{O}{\,\tilde{ }\,}}}
\DeclareBoldMathCommand{\balpha}{\alpha}
\def\jac{\ensuremath{{\rm Jac}}}
\def\diag{\ensuremath{\mathrm{diag}}}
\DeclareBoldMathCommand{\H}{H}
\DeclareBoldMathCommand{\G}{G}
\DeclareBoldMathCommand{\F}{F}
\DeclareBoldMathCommand{\P}{P}
\DeclareBoldMathCommand{\Q}{Q}
\DeclareBoldMathCommand{\calG}{\mathcal{G}}
\title{Computing Polynomial Representation in Subrings of Multivariate Polynomial Rings}
\author{Thi Xuan Vu}
\affiliation{%
  \institution{Univ. Lille, CNRS, Centrale Lille, UMR 9189 CRIStAL}
  \postcode{F-59000}
  \city{Lille}
  \country{France}
}
\email{thi-xuan.vu@univ-lille.fr}
\begin{document}

  \begin{abstract}
  Let $\R = \mathbb{K}[x_1, \dots, x_n]$ be a multivariate polynomial ring over a field $\mathbb{K}$ of characteristic 0. Consider $n$ algebraically independent elements $g_1, \dots, g_n$ in $\R$. Let $\S$ denote the subring of $\R$ generated by $g_1, \dots, g_n$, and let $h$ be an element of $\S$. Then, there exists a unique element ${f} \in \K[u_1, \dots, u_n]$ such that $h = f(g_1, \dots, g_n)$. 

In this paper, we provide an algorithm for computing ${f}$, given $h$ and $g_1, \dots, g_n$. The complexity of our algorithm is linear in the size of the input, $h$ and $g_1, \dots, g_n$, and polynomial in $n$ when the degree of $f$ is fixed. Previous works are mostly known when $f$ is a symmetric polynomial and $g_1, \dots, g_n$ are elementary symmetric, homogeneous symmetric, or power symmetric polynomials.

\end{abstract}

\keywords{Polynomial rings, subrings, invriant polynomials, complexity analysis, lifting techniques}

\settopmatter{printfolios=true}
\maketitle

\section{Introduction}
\subsubsection*{\bf Problem statement.}
Let $\R = \K[x_1, \dots, x_n]$ denote the ring of multivariate polynomials over a field $\K$, and consider $n$ algebraically independent elements $g_1, \dots, g_n$ in $\R$. We let $\S = \K[g_1, \dots, g_n]$ denote the subring of $\R$ comprising all polynomial expressions in $g_1, \dots, g_n$ with coefficients in $\K$. 

This implies that for any element $h \in \S$, there exists a unique polynomial $f \in \K[u_1, \dots, u_n]$, where $(u_1, \dots, u_n)$ are new variables, such that
\[
h = f(g_1, \dots, g_n).
\]

Since $\S$ is closed under addition, multiplication, and contains the constants, it forms a subring of $\R$. The problem we consider is the computation of $f$ given $h$ and $g_1, \dots, g_n$. For simplicity, we focus here on a single polynomial $h$. However, the results in this paper can be extended to any set $H = (h_1, \dots, h_m)$ of polynomials in $\S$ by applying our algorithm to each $h_i$ individually.

\subsubsection*{\bf Motivation.} Our main motivations of this paper come from solving some structured multivariate polynomial systems and related problems. 
\paragraph{Decomposing Multivariate Polynomials.}  
Given a set of polynomials \( H = (h_1, \dots, h_m) \) over \( \mathbb{K} \), the problem of finding sets of polynomials \( F = (f_1, \dots, f_m) \) and \( G = (g_1, \dots, g_n) \), all in \( \mathbb{K} \), such that  
\[
H = (h_1, \dots, h_m) = (f_1(g_1, \dots, g_n), \dots, f_m(g_1, \dots, g_n))
\]  
is a classical problem in computer algebra.  

In the univariate case, this type of decomposition is a standard functionality implemented in several computer algebra systems. For example, the function \texttt{compoly} in Maple is specifically designed for such purposes. In the multivariate case, there have been numerous contributions \cite{gathen2003multivariate, ding1999cryptanalysis, faugere2006cryptanalysis, ye2001decomposing, faugere2009efficient}, with one of the most recent results provided in \cite{faugere2009efficient}, focusing on polynomials of the same degree.  

A common feature of all known techniques for decomposition is that they divide the problem into two main steps: first, compute candidates for \( g_1, \dots, g_n \), and then recover \( f_1, \dots, f_m \) from this knowledge. Consequently, one of the motivations for this paper is to make progress toward an algorithmic solution for decomposing any multivariate polynomials.  

It is worth noting that, in this context, the degrees of the polynomials \( f_i \) are bounded by the degrees of the original polynomials \( h_i \). Finally, we remark that the functional decomposition problem also has applications in cryptography, particularly in the cryptanalysis of 2R-schemes (see, for instance, \cite{faugere2009efficient} and \cite{patarin1997asymmetric}) and in systems of differential equations, particularly in the angle of model reduction (see
e.g. \cite{feret2009internal, feret2012lumpability, hubert2013scaling,
  cardelli2017erode, cardelli2017maximal} and   
\cite{ovchinnikov2021clue, jimenez2022computing,
  demin2023exact}).

\paragraph{Solving Invariant Polynomial Systems.}  
The computation of \( f \) also plays a critical role in the study of invariant polynomials under the action of finite groups.

Let \( \calG \) be a finite matrix group acting on \( \K[x_1, \dots, x_n] \). The invariant ring of the group \( \calG \) is the ring \( \R^\calG \) of all polynomials invariant under the action of \( \calG \). Then, there exist homogeneous, algebraically independent polynomials \( G = (g_1, \dots, g_n) \) and homogeneous invariants \( \sigma = (\sigma_1, \dots, \sigma_r) \) such that
\[
\R^\calG = \bigoplus_{j=1}^r \K[g_1, \dots, g_n] \, \sigma_j.
\]
Here, \( (g_1, \dots, g_n) \) and \( (\sigma_1, \dots, \sigma_r) \) are called the primary and secondary invariants, respectively. In other words, any \( h \in \R^\calG \) can be uniquely written as
\begin{equation} \label{eq:second-first}
h = \sum_{j=1}^r f_j(g_1, \dots, g_n) \, \sigma_j,
\end{equation}
for some \( f_j \in \K[u_1, \dots, u_n] \).

When \( \calG \) is a finite pseudo-reflection group, \( \R^\calG \) is generated by a finite set of algebraically independent polynomials \( G = (g_1, \dots, g_n) \), known as the basic invariants. That is,
\[
\R^\calG = \K[g_1, \dots, g_n],
\]
or equivalently, any  polynomial \( h \in \R^\calG \) can be uniquely expressed as \( f(g_1, \dots, g_n) \), where \( f \) is in \( \mathbb{K}[u_1, \dots, u_n] \) (see Section \ref{subsec:invariant} for more details).

Our question is motivated by polynomial system solving using algorithms based on \textit{geometric resolution} \cite{giusti1997lower, giusti1998straight, giusti1995polynomial, heintz2000deformation, giusti2001grobner, lecerf2003computing}. The complexity of such algorithms depends on two factors: (i) the complexity of evaluating the input system and (ii) the geometric properties of the system. If the system is invariant under the action of some pseudo-reflection group, a standard approach \cite{gatermann1996semi, Colin97, worfolk1994zeros} is to rewrite it in terms of invariants and then solve the system in the new variables.

To be more precise, let \( H = (h_1, \dots, h_m) \) be a set of polynomials invariant under the action of \( \calG \). Suppose \( F = (f_1, \dots, f_m) \) are polynomials in \( \mathbb{K}[u_1, \dots, u_n] \) such that, instead of solving the system \( h_1 = \cdots = h_m = 0 \) directly, we solve two simpler systems:
\begin{equation} \label{eq:decom}
G = 0 \quad \text{and} \quad F - \balpha = 0,
\end{equation}
where \( \balpha = (\alpha_1, \dots, \alpha_n) \) is a root of \( G = 0 \). This approach allows us to leverage the structure of the invariant ring to simplify the solution of polynomial systems with invariants.

In this paper, we provide an answer to point (i) above, specifically addressing the complexity of evaluating the system in the new variables \( (u_1, \dots, u_n) \).

In addition, the representation \( f \) of \( h \) has significant implications for simplifying computations in invariant theory. By reducing the problem of working with \( f \) to the smaller ring \( \mathbb{K}[u_1, \dots, u_n] \), we can leverage the structure of the basic invariants of \( \calG \) to perform symbolic computations more efficiently (see for examples \cite{FLSSV2021, labahn2023faster, riener2024connectivity, vu2022computing} and references therein). Moreover, understanding the degrees of \( f \) relative to \( h \) and the basic invariants of \( \calG \) provides insight into the algebraic and geometric properties of the invariant ring. However, it is worth noting that while the output \( f \) will be more structured, solving the new system could still be more costly. We consider the study of solving invariant systems using this approach as one of the subjects for future work.

\subsubsection*{\bf Prior Works} 
Note that determining $f$ knowing $h$ and $g_1, \dots, g_n$ is a subfield membership problem (\cite{gathen2003multivariate, sweedler1993using}). This is a difficult problem in general. 

When \( h \) is invariant under the action of the symmetric group and \( (g_1, \dots, g_n) \) are elementary symmetric polynomials, Gaudry, Schost, and Thiéry proved in \cite{gaudry2006evaluation} that the complexity of evaluating \( f \) is bounded by \( \delta(n) L_2 + 2 \), where \( L_2 \) is the complexity of evaluating \( h \), and \( \delta(n) \leq 4^n (n!)^2 \). The authors also mentioned that they do not know whether the factor \( \delta(n) \), which grows polynomially with \( n! \), is optimal. Our result in this paper addresses a more general question, where the factor is the cost of multivariate series multiplication at some precision, rather than \( \delta(n) \).

In the context of symmetric polynomials, where \( h \) and \( (g_1, \dots, g_n) \) are elementary symmetric polynomials, Bläser and Jindal showed that the complexity of evaluating \( f \) is bounded by \( \softO(d^2 L_2 + d^2 n^2) \), where \( d \) is the degree of \( h \). Their algorithm is faster than ours in this paper; however, we address a more general question, not limited to (elementary) symmetric polynomials. It is feasible to have a faster algorithm in this case due to the fact that \( (x_1, \dots, x_n) \) are solutions of the following univariate polynomial:  
\[
P(T) = T^n - \eta_1(\mathbf{x}) T^{n-1} + \cdots + (-1)^{n-1} \eta_{n-1} T + (-1)^n \eta_n,
\]
where \( \eta_k(\mathbf{x}) \) is the \( k \)-th elementary symmetric polynomial. However, this property does not hold for any set of algebraically independent polynomials \( g_1, \dots, g_n \).

Recently, Chaugule et al. \cite[Theorem 4.16]{chaugule2023schur} extended the result of Bläser-Jindal \cite{BlaserJindal18} to other bases, such as the homogeneous symmetric basis and the power symmetric basis, while maintaining the same bound \( \softO(d^2 L_2 + d^2 n^2) \) for the complexity of evaluating \( f \).

In a more general context, when \( h \) is invariant under the action of any finite group \( \calG \), Dahan, Schost, and Wu \cite[Theorem 1]{dahan2009evaluation} proved that the cost of rewriting \( h \) as in \eqref{eq:second-first} is bounded by  
\[
O\big(n^5\delta + n \delta^6 + L_1n^2\delta^4 + L_2 \delta^3\big),
\]  
where \( \delta = \deg(g_1) \cdots \deg(g_n) \) and \( L_1 \) is the complexity of evaluating \( G \) and \( \sigma \). This cost is high (as also mentioned in their paper), especially when the degrees of \( g_i \) are large. Surprisingly, the degree of \( h \) does not appear in the bound.

In general, for any polynomials \(h\) and \(g_1, \dots, g_n\), it is well known that Gröbner bases can be used to compute \(f\) from \(h\) and the \(g_i\)'s. However, the complexity of this process has not been fully analyzed.  Precisely, in the polynomial ring $\K[x_1, \dots, x_n, u_1, \dots, u_n]$, we fix a monomial order $\succ$ where any monomial involving one of $(x_1, \dots, x_n)$ is greater than all monomials in $\K[u_1, \dots, u_n]$. 
Let $\mathcal{B}$ be a {G}r{\"o}bner basis with respect to the order $\succ$ of the ideal $\langle g_1-u_1, \dots, g_n-u_n \rangle \subset \K[x_1, \dots, x_n, u_1, \dots, u_n]$. Then, the polynomial $f$ can be obtained as the remainder of $h$ on division by $\mathcal{B}$.  We refer the reader to \cite[Proposition~4 - Section~1 - Chapter~7]{CLO07} for a detailed description of this procedure.

Another straightforward strategy is to use linear algebra by exploiting the symbolic equality $h = f(g_1, \dots, g_n)$. By comparing the coefficients on both sides of this equality, we obtain a linear system of $\binom{n+\Delta}{n}$ equations in $\binom{n+\Delta}{n}$ unknown coefficients of $f$. The complexity of solving for $f$ using this strategy is $\tilde{O}(\binom{n+\Delta}{n}^\omega)$ operations in $\mathbb{K}$, where $ \Delta$ is a degree bound of $f$ and $\omega$ is  the complexity exponent of linear algebra operations, (for some algorithms, such as Gauss algorithm, $\omega = 3$).

\subsubsection*{\bf Some notations.}  
Let $h$ be a polynomial of degree $d$ in $n$ variables. The usual encoding to represent $h$ is an array of $\binom{n+d}{n}$ coefficients. In contrast, the sparse encoding only represents the non-zero coefficient-exponent tuples. In this paper, we use another way of representing polynomials, which is called the straight-line program encoding, or equivalently, algebraic circuits. 

A \textit{straight-line program} (SLP) computing a set of polynomials $h_1, \dots, h_m$ in $\K[x_1, \dots, x_n]$ is a circuit 
\[
\gamma = (\gamma_{-n+1}, \dots, \gamma_0, \gamma_1, \dots, \gamma_L),
\]
where all $h_i$ belong to \(\gamma\), $\gamma_{-n+1}:=x_1, \dots, \gamma_0 := x_n$, and for $k > 0$, $\gamma_k$ is one of the following forms: 

\begin{itemize}
    \item $\gamma_k = a * \gamma_i$ or $\gamma_k = \gamma_i * \gamma_j$,
\end{itemize}
where $* \in \{+, -, \times\}$, $a \in \K$, and $i, j < k$. Then $L$ is the length of $\gamma$.
Obviously, some polynomials admit short straight-line programs even if they contain many monomials. For example, the polynomial \( (x+1)^k \) has \( k+1 \) monomials, but it can be computed using a straight-line program of length logarithmic in \( k \).

The idea of using straight-line programs first appeared in the probabilistic testing of polynomial identities. In computer algebraic applications, this encoding was first used with the elimination of one-variable problems \cite{heintz1981absolute, kaltofen1988greatest, kaltofen1989factorization}. Our motivation for using this representation in this paper comes from polynomial system-solving algorithms using Newton-Hensel lifting techniques, as initiated in \cite{giusti1997lower, giusti1998straight, giusti1995polynomial} (see also Section \ref{sec:lifting}).  

Note that the use of SLP  representation is not a limitation, as it encompasses both dense and sparse representations. Specifically, any polynomial of degree $d$ in $n$ variables can be computed by an SLP of length at most $3\binom{n+d}{n}$. This follows from the fact that there are $\binom{n+d}{n}$ monomials of degree at most $d$ in $n$ variables. Computing each monomial with its corresponding coefficient and summing them requires at most $2\binom{n+d}{n}$ arithmetic operations. Thus, the total length of the SLP is bounded by $3\binom{n+d}{n}$. For sparse representations, where a polynomial has $N$ nonzero coefficients, a SLP of length $L = O(Nd)$ suffices to compute it.

In what follows, we use $\softO(\cdot)$ to
indicate that polylogarithmic factors are omitted, that is, $a$ is in
$\softO(b)$ if there exists a constant $k$ such that $a$ is $O(b \,
\log^k(b))$ (see~\cite[Chapter 25.7]{Gat03} for technical
details). The smallest integer larger or equal to $a$ is written as
$\lceil a \rceil$.

For positive integers $\Delta$ and $n$, let ${\mathcal{M}}_b(\Delta)$ denote the cost of multiplying univariate polynomials of degree $\Delta$ in terms of operations in the base ring $\K$, and let ${\mathcal{M}}(\Delta, n)$ denote the cost of multiplying $n$-variate power series to total degree $\Delta$. The quantity ${\mathcal{M}}_b(\Delta)$ can be taken in $\tilde{O}(\Delta) = O(\Delta \log \Delta \log \log \Delta)$ using the algorithms of Schönhage and Strassen~\cite{schonhage1971schnelle}, Schönhage~\cite{schonhage1977schnelle}, and Cantor and Kaltofen~\cite{cantor1991fast}. Over a finite field with $q$ elements, under the assumption that there exists a Linnik constant, one can take ${\mathcal{M}}_b(\Delta) = O(n \log q \log(n \log q))$~\cite{harvey2022polynomial}.

The quantity ${\mathcal{M}}(\Delta, n)$ can be bounded above by ${\mathcal{M}}_b((2\Delta+1)^n)$ via Kronecker substitution (see~\cite{Kronecker82} and~\cite[Chapter 8.4]{Gat03}). If the field $\K$ has characteristic zero, then ${\mathcal{M}}(\Delta, n)$ is in $\tilde{O}({\mathcal{M}}_b(\binom{\Delta+n}{n}))$; that is, linear in the size of the dense representation of the series, up to logarithmic factors (see~\cite{lecerf2003fast}).

If two $n$-variate power series are given at precision $\Delta$ by a straight-line program (SLP) of length $L$, then ${\mathcal{M}}(\Delta, n) = O(\Delta^2 L)$. Indeed, there exists a SLP of length $L+1$ to compute the product of the $\Delta$-truncations of the two power series. Moreover, by a result of Krick and Pardo~\cite{krick1996computational}, there exists a SLP of length $(\Delta + 1)^2 (L+1)$ to compute all the homogeneous components of this product — that is, all terms of degrees up to $\Delta$.

\subsubsection*{\bf Main Result}  
Our main result in this paper is the following.
\begin{theorem} \label{main:thm}
Let \(\K\) be a field of characteristic zero, and let \( g_1, \dots, \) \(g_n \) be \( n \) algebraically independent elements in \( \R = \K[x_1, \dots, x_n] \). Let \( \S \) denote the subring \( \K[g_1, \dots, g_n] \) of \( \R \) generated by \( g_1, \dots, g_n \). Then, for any polynomial \(h\) in \(\S\), there exists a randomized algorithm that, given \( (g_1, \dots, g_n) \) and \( h \), returns \( f \) in \( \K[u_1, \dots, u_n] \) such that  
  \[
  f(g_1, \dots, g_n) = h,
  \]
  using  
  \[
  \softO\big((nL_1 + n^4 + L_2) \mathcal{M}(\Delta, n) \big)
  \]  
  operations in \( \K \), where \( L_1 \) and \( L_2 \) are the lengths of the straight-line programs computing \( (g_1, \dots, g_n) \) and \( h \), respectively, and \( \Delta \) is a bound on the degree of \( f \).
\end{theorem}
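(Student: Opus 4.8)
The plan is to recover the coefficients of $f$ one graded piece at a time by exploiting the change of variables $x_i \mapsto g_i$ as a local isomorphism of power series rings. Concretely, since $g_1, \dots, g_n$ are algebraically independent over $\K$, after a generic linear (or translation) change of coordinates we may assume the Jacobian of $G = (g_1, \dots, g_n)$ is invertible at the origin, and moreover that the $g_i$ have no constant term; then the map $\Phi: (u_1, \dots, u_n) \mapsto (g_1(\x), \dots, g_n(\x))$ is invertible as a map of formal power series. Let $\bm{\varphi} = (\varphi_1, \dots, \varphi_n)$ be its compositional inverse, so that $\x = \bm{\varphi}(g_1(\x), \dots, g_n(\x))$ and hence $f = h(\varphi_1, \dots, \varphi_n)$ as power series; since we know $f$ is a polynomial of degree at most $\Delta$, it suffices to compute this composition modulo terms of degree $>\Delta$. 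This reduces the problem to two subtasks: (a) compute the $\varphi_i$ to precision $\Delta$, and (b) compose $h$ with the $\varphi_i$ to precision $\Delta$.

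For step (a), I would use Newton--Hensel iteration on the system $\bm{\varphi}$: the inverse series satisfies $G(\bm{\varphi}(\u)) = \u$, and the standard Newton operator for functional inversion doubles the known precision at each step, so $O(\log \Delta)$ iterations suffice. Each iteration requires evaluating $G$ and its Jacobian at the current power-series approximation and solving an $n \times n$ linear system over the truncated power series ring, which is where the $nL_1$ (evaluating the SLP for $G$, with an extra factor $n$ for the Jacobian via automatic differentiation à la Baur--Strassen) and the $n^4$ (linear algebra: $n^3$ for the solve, times another $n$ or absorbed in $\softO$) terms in the bound originate, each multiplied by the cost $\mathcal{M}(\Delta, n)$ of one truncated multivariate series multiplication. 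For step (b), once $\bm{\varphi}$ is known to precision $\Delta$, I run the SLP for $h$ (length $L_2$) on the power-series inputs $\varphi_i$, truncating every arithmetic operation to total degree $\Delta$; this costs $O(L_2)$ series multiplications, i.e. $O(L_2 \mathcal{M}(\Delta,n))$, and yields $f$ truncated to degree $\Delta$, which by the degree bound is exactly $f$. The randomization enters only in choosing the linear change of coordinates making $\jac(G)$ invertible at the chosen point — a Zariski-open condition, so a random choice works with high probability, and correctness can in principle be verified a posteriori.

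The main obstacle, and the step requiring the most care, is the Newton iteration for the inverse series in step (a): one must set up the correct Newton operator for multivariate functional inversion, verify the quadratic convergence of the precision, and — crucially for the complexity claim — argue that the intermediate quantities (in particular the inverse Jacobian matrix over the truncated series ring) can be maintained and updated within the claimed budget of $\softO((nL_1 + n^4)\mathcal{M}(\Delta,n))$ rather than naively recomputing everything from scratch at full precision each time; the geometric growth of the per-iteration cost means the last iteration dominates, which is what keeps the bound clean. A secondary technical point is handling the reduction to the ``$g_i$ have no constant term and invertible Jacobian at $\0$'' normal form without disturbing the SLP lengths $L_1, L_2$ by more than $O(n)$ additive (composing with an affine map) — this is routine but must be stated. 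Finally, I would note that the degree bound $\Delta$ on $f$ is assumed given; if it is not, one can obtain it from $\deg(h)$ and the $\deg(g_i)$, or double it until the reconstructed $f$ stabilizes and verifies $f(G) = h$.
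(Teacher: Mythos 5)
Your proposal is correct and takes essentially the same route as the paper: shift to a generic point so that the Jacobian of $(g_1,\dots,g_n)$ is invertible there (you justify this via the Jacobian criterion plus genericity, the paper via a transversality lemma---equivalent in characteristic zero), compute the truncated inverse power series by Newton--Hensel lifting in $\softO\big((nL_1+n^4)\,\mathcal{M}(\Delta,n)\big)$ operations, evaluate the SLP of $h$ on these truncations for $O\big(L_2\,\mathcal{M}(\Delta,n)\big)$, and undo the affine shift at negligible cost. This matches the paper's {\sf Representation} algorithm and its complexity analysis, including your closing remark about doubling $\Delta$ adaptively when no degree bound is given.
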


In particular, when \( h \) is invariant under the action of a finite pseudo-reflection group (e.g., symmetric groups \( S_n \); Weyl groups of Lie algebras such as dihedral groups \( D_n \), signed symmetric groups \( B_n \), etc.), the degree bound \( \Delta \) is bounded by the degree \( d \) of \( h \). Then, we have \( \mathcal{M}(\Delta, n) = \mathcal{M}(d, n) \) (see Theorem \ref{thm:spec}). Note that while our complexity bound does not depend on the degrees of \( g_i \)'s as in \cite{dahan2009evaluation}, it depends on the degree of \( h \) (or more generally, that of \( f \)).

\subsubsection*{\bf Structure of the Paper}  
In the next section, we present our main tool for deriving the main algorithm, which is given in Section~\ref{sec:main}. Additionally, an illustrative example demonstrating the algorithm is provided in Appendix~\ref{sec:ex}.  
As mentioned above, the complexity of our main algorithm depends on the degree bound of the representation polynomial \( f \). Therefore, Section~\ref{subsec:invariant} is dedicated to analyzing such degree bounds, with a specialization  on polynomials invariant under the action of a finite pseudo-reflection group.  
The paper concludes with a summary of our results and a discussion on topics for future research. 


\section{Newton-Hensel lifting}
\label{sec:lifting}

Lifting techniques are classical methods that can be found, for example, in~\cite{heintz2000deformation, giusti2001grobner, schost2003computing} (see also the references therein). 

In this section, the notations $\x$ and $\e$ represent the sets of variables $(x_1, \dots, x_n)$ and $(u_1, \dots, u_m)$, respectively. The ideal $\langle \e \rangle$ in the polynomial ring $\K[\e]$ is generated by $u_1, \dots, u_m$. 
For a positive integer $\delta$, let  
\[
\langle \e \rangle^{\delta} := \langle u_1^{t_1} \cdots u_m^{t_m} \mid t_1 + \cdots + t_m \geq \delta+1 \rangle \subset \K[\e]
\]  
denote the ideal in $\K[\e]$ generated by all monomials of degree at least $\delta+1$. If $p$ is a polynomial in $\K[\e]$ or a power series in $\K[[\e]]$, then $p \mod \langle \e \rangle^\delta$ refers to the part of $p$ containing terms of degree up to $\delta$.
The rest of this section is devoted to proving the following Proposition. 

\begin{proposition} \label{prop:newton}
Let $\h = (p_1, \dots, p_n)$ be a sequence of polynomials in $\K[\x, \e]$, and let $\balpha = (\alpha_1, \dots, \alpha_m) \in \Kbar{}^m$ be a solution of $P(\x, 0, \dots, 0) = 0$. Assume that the Jacobian matrix of $\h(\x, 0, \dots, 0)$ with respect to $\x$ is full rank at $\balpha$. 

Then there exists a unique vector of power series $\v = (v_1, \dots, v_n)$ in $\Kbar[[\e]]$ such that  
\begin{equation}\label{eq:satis}
\v(0, \dots, 0) = \balpha \quad \text{and} \quad p_1(\v, \e) = \cdots = p_n(\v, \e) = 0.
\end{equation}  

Furthermore, the vector of power series $\v$ can be truncated to an arbitrary degree $\delta$ using the ${\sf Lifting}$ algorithm, such that  
\[
\v^{(\lceil \log_2(\delta) \rceil)} = \v \mod \langle \e \rangle^\delta,
\]
where $\v^{(\lceil \log_2(\delta) \rceil)}$ is the output of the ${\sf Lifting}$ algorithm, which takes $\h$, $\balpha$, and $\delta$ as input. The complexity of computing this truncation is  
\[
\softO((nL + n^4){\mathcal{M}}(\delta, m))
\]
operations in $\K$, where $L$ is the length of a straight-line program computing $\h$.  
\end{proposition}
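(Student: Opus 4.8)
The plan is to follow the classical Newton–Hensel lifting scheme, adapted to the multivariate power-series setting where the "$t$-adic" valuation is replaced by the $\langle \e \rangle$-adic one. The existence and uniqueness of $\v$ is the multivariate analogue of the implicit function theorem: since the Jacobian $\jac_{\x}\h(\x,\0)$ is invertible at $\balpha$, one shows by induction on the degree $\delta$ that there is a unique truncation $\v_\delta \in \Kbar[\e]/\langle\e\rangle^{\delta+1}$ with $\v_\delta(\0)=\balpha$ and $\h(\v_\delta,\e)\equiv \0 \pmod{\langle\e\rangle^{\delta+1}}$; taking the inverse limit produces the unique power-series solution $\v$. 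The key algebraic fact is that the Jacobian $\jac_{\x}\h(\v,\e)$, viewed over $\Kbar[[\e]]$, reduces modulo $\langle\e\rangle$ to $\jac_{\x}\h(\balpha,\0)$, which is invertible; hence $\jac_{\x}\h(\v,\e)$ itself is invertible over $\Kbar[[\e]]$, and its inverse can likewise be computed by lifting.

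The ${\sf Lifting}$ algorithm then performs the usual doubling iteration: starting from $\v^{(0)} = \balpha$ (correct modulo $\langle\e\rangle$), each step replaces a solution $\v^{(k)}$ correct modulo $\langle\e\rangle^{2^k}$ by
\[
\v^{(k+1)} = \v^{(k)} - \bigl(\jac_{\x}\h(\v^{(k)},\e)\bigr)^{-1}\, \h(\v^{(k)},\e) \bmod \langle\e\rangle^{2^{k+1}},
\]
which is correct modulo $\langle\e\rangle^{2^{k+1}}$ by the standard quadratic-convergence argument (Taylor expansion of $\h$ at $\v^{(k)}$, using that the discarded terms lie in $\langle\e\rangle^{2^{k+1}}$). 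After $\lceil\log_2\delta\rceil$ steps the truncation is exact modulo $\langle\e\rangle^\delta$, giving the claimed identity $\v^{(\lceil\log_2(\delta)\rceil)} = \v \bmod \langle\e\rangle^\delta$. One must be slightly careful that the matrix inverse is itself only maintained to the appropriate precision and updated by one Newton step per iteration (à la the $\softO$-style "lazy" matrix inversion), so that inverting the $n\times n$ matrix does not need to be redone from scratch at full precision.

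For the complexity, at the $k$-th iteration all arithmetic is done on $m$-variate series truncated at degree $2^k \le \delta$, so each series operation costs $O({\mathcal{M}}(\delta,m))$. Evaluating $\h$ and its Jacobian at $\v^{(k)}$ is done via the straight-line program: $\h$ costs $L$ series operations, and the full Jacobian is obtained by the Baur–Strassen theorem for all $n$ partials simultaneously at cost $O(nL)$ series operations (one runs the SLP of length $O(L)$ with $n$-tuples, or the reverse-mode SLP $n$ times). Inverting the $n\times n$ series matrix and multiplying it by the vector $\h(\v^{(k)},\e)$ costs $O(n^\omega)$, or $O(n^4)$ with the crude bound stated, series operations. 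Summing over the $\lceil\log_2\delta\rceil = O(\log\delta)$ iterations and absorbing this logarithmic factor into the $\softO$ yields the bound $\softO\bigl((nL+n^4){\mathcal{M}}(\delta,m)\bigr)$. The main obstacle — really the only subtle point — is the bookkeeping that guarantees the Jacobian matrix is invertible over $\Kbar[[\e]]$ and that its inverse, maintained to growing precision, is updated correctly and cheaply at each step; everything else is a routine adaptation of univariate Hensel lifting with univariate polynomial multiplication ${\mathcal{M}}_b$ replaced by multivariate series multiplication ${\mathcal{M}}(\delta,m)$.
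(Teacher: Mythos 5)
Your proposal is correct and follows essentially the same route as the paper: a doubling Newton--Hensel iteration with Taylor-expansion arguments for quadratic convergence and invertibility of the Jacobian modulo $\langle \e \rangle$, Baur--Strassen for the partial derivatives at cost $O(nL)$ series operations, an $O(n^4)$ matrix inversion per step, and a sum over the $\lceil\log_2\delta\rceil$ iterations. The only cosmetic difference is that the paper simply re-inverts $\jac(\v^{(k)},\e)$ from scratch at each iteration (via Leverrier or Berkowitz), so the ``lazy'' inverse-update bookkeeping you flag as the subtle point is not actually needed for the stated bound.
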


\begin{algorithm}[h] 	 
\caption{${\sf Lifting}(\h, \balpha, \delta)$}
\label{alg:lift}

\begin{flushleft}
{\bf Input:}
\begin{itemize}
  \item A sequence of polynomials $\h = (p_1, \dots, p_n)$ in $\K[\x, \e]^n$, where $\x = (x_1, \dots, x_n)$ and $\e = (u_1, \dots, u_m)$.
  \item A point $\balpha = (\alpha_1, \dots, \alpha_n) \in \Kbar{}^n$.
  \item A positive integer $\delta$.
\end{itemize}

{\bf Assumptions:}
\begin{itemize}
  \item $\balpha$ is a root of $\h(\x, 0, \dots, 0)$.
  \item The Jacobian matrix of $\h(\x, 0, \dots, 0)$ with respect to $\x$ has full rank at $\balpha$.
\end{itemize}

{\bf Output:} A vector that truncates $\balpha$ in $\K[[\e]]$ with precision $\delta$.
\end{flushleft}

\begin{enumerate}
\item Initialize $\v^{(0)} = \balpha$.

\item Compute the Jacobian matrix $\jac$ of $\h$ with respect to $\x$.

\item For $k = 1$ to $\lceil \log_2(\delta) \rceil$, do:
  \begin{enumerate}
    \item Update:
    \[
    \v^{(k)} = 
    \begin{pmatrix}
    v^{(k-1)}_1 \\ \vdots \\ v^{(k-1)}_n
    \end{pmatrix} 
    - \left(\jac(\v^{(k-1)}, \e)\right)^{-1} 
    \begin{pmatrix}
    p_1(\v^{(k-1)}, \e) \\ \vdots \\ p_n(\v^{(k-1)}, \e)
    \end{pmatrix}.
    \]
  \end{enumerate}
\item Return $\v^{(\lceil \log_2(\delta) \rceil)}$.
\end{enumerate}
\end{algorithm}

\begin{example}
Let us consider the polynomials $$\h = (p_1, p_2)= (x_1 + x_2 - u_1 - 2 , x_1^2 + x_2^2 - u_2 - 10)$$ in $\K[x_1, x_2, u_1, u_2]$. The point $(-1, 3)$ is a root of the system $$p_1(x_1, x_2, 0, 0) = p_2(x_1, x_2, 0, 0) = 0.$$ The Jacobian matrix of $(p_1(x_1, x_2, 0, 0), p_2(x_1, x_2, 0, 0))$ with respect to $(x_1, x_2)$ is  
\[
\begin{pmatrix} 
1 & 1 \\ 
2x_1 & 2x_2 
\end{pmatrix},
\]
which has full rank at $(-1, 3)$.

 The power series  
\[
v_1 = -1 + \frac{3}{4}u_1 - \frac{1}{8}u_2 + \frac{5}{64}u_1^2 - \frac{1}{64}u_1u_2 + \frac{1}{256}u_2^2 + \langle u_1, u_2 \rangle^2
\]
and  
\[
v_2 = 3 + \frac{1}{4}u_1 + \frac{1}{8}u_2 - \frac{5}{64}u_1^2 + \frac{1}{64}u_1u_2 - \frac{1}{256}u_2^2 + \langle u_1, u_2 \rangle^2
\]
in $\K[[u_1, u_2]]$ satisfy the following: $v_1(0, 0) = -1, \ v_2(0, 0) = 3, \ \text{and}$
\[
 \ p_1(v_1, v_2, u_1, u_2) = p_2(v_1, v_2, u_1, u_2) = 0.
\]
\end{example}

\subsection{The existence of power series solutions}
We first prove that the sequence $\left(\v^{(k)}\right)_{k \geq 0}$ is well-defined. To do so, we establish the following claims:

\begin{lemma}\label{lemma:convergent} Let $\jac$ denote the Jacobian matrix of $\h$ with respect to $\x$. Then, for any integer $k \in \mathbb{Z}_{\geq 0}$, 
\begin{itemize}
    \item[$(${\bf a}$)$] the determinant of $\jac$ at $\v^{(k)}$ is invertible in $\K[\e]/\langle \e \rangle$ and
    \item[$(${\bf b}$)$] $p_i(\v^{(k)}, \e) = 0 \mod \langle \e \rangle^{2^k}$ for all $i = 1, \dots, n$.
\end{itemize}
\end{lemma}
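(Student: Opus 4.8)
The plan is to prove both statements simultaneously by induction on $k$, since they are intertwined: part (a) at level $k$ is what makes the Newton update defining $\v^{(k+1)}$ legitimate, and part (b) is the quantitative statement that the Newton iteration doubles the precision at each step.

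\medskip

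\noindent\textbf{Base case $k = 0$.} Here $\v^{(0)} = \balpha$. For (b), note that $\langle \e\rangle^{2^0} = \langle \e\rangle^1 = \langle u_1,\dots,u_m\rangle$, so $p_i(\v^{(0)},\e) = 0 \bmod \langle\e\rangle^1$ just asserts that the constant term of $p_i(\balpha,\e)$ vanishes, i.e.\ $p_i(\balpha,0,\dots,0) = 0$; this is exactly the assumption that $\balpha$ is a root of $\h(\x,0,\dots,0)$. For (a), the determinant of $\jac$ at $\balpha$ reduced mod $\langle\e\rangle$ is $\det\big(\jac(\balpha,0,\dots,0)\big)$, the Jacobian of $\h(\x,0,\dots,0)$ at $\balpha$, which is a nonzero element of $\Kbar$ by the full-rank hypothesis; hence it is invertible in $\Kbar[\e]/\langle\e\rangle \cong \Kbar$. (Strictly, to run the iteration over $\Kbar[[\e]]$ one works with the localization, or observes that $\det\jac(\v^{(k)})$ has nonzero constant term and is therefore a unit in the power series ring; I would phrase (a) as "has nonzero constant term as an element of $\Kbar[[\e]]$", which is the property actually used to invert $\jac$.)

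\medskip

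\noindent\textbf{Inductive step.} Assume (a) and (b) hold at level $k$; I must establish them at level $k+1$. Since $\det\jac(\v^{(k)},\e)$ has nonzero constant term, $\jac(\v^{(k)},\e)$ is invertible over $\Kbar[[\e]]$, so the update
\[
\v^{(k+1)} = \v^{(k)} - \jac(\v^{(k)},\e)^{-1}\,\h(\v^{(k)},\e)
\]
is well-defined, and moreover $\v^{(k+1)} \equiv \v^{(k)} \bmod \langle\e\rangle^{2^k}$ because $\h(\v^{(k)},\e) \equiv 0 \bmod \langle\e\rangle^{2^k}$ by (b) at level $k$. From this congruence, $\jac(\v^{(k+1)},\e) \equiv \jac(\v^{(k)},\e) \bmod \langle\e\rangle^{2^k}$, so in particular they agree modulo $\langle\e\rangle$; hence $\det\jac(\v^{(k+1)},\e)$ has the same (nonzero) constant term as $\det\jac(\v^{(k)},\e)$, giving (a) at level $k+1$.

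For (b) at level $k+1$, the key is the first-order Taylor expansion of $\h$ around $\v^{(k)}$. Writing $\bm{\epsilon} := \v^{(k+1)} - \v^{(k)} = -\jac(\v^{(k)},\e)^{-1}\h(\v^{(k)},\e)$, which lies in $\langle\e\rangle^{2^k}$ componentwise, Taylor's formula with remainder gives
\[
\h(\v^{(k+1)},\e) = \h(\v^{(k)},\e) + \jac(\v^{(k)},\e)\cdot \bm{\epsilon} + \bm{r},
\]
where each entry of the remainder $\bm{r}$ is a sum of terms each divisible by a product of two components of $\bm{\epsilon}$, hence $\bm{r} \in \langle\e\rangle^{2\cdot 2^k} = \langle\e\rangle^{2^{k+1}}$. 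By the definition of $\bm{\epsilon}$, the first two terms cancel exactly: $\h(\v^{(k)},\e) + \jac(\v^{(k)},\e)\cdot\bm{\epsilon} = 0$. Therefore $\h(\v^{(k+1)},\e) = \bm{r} \equiv 0 \bmod \langle\e\rangle^{2^{k+1}}$, which is (b) at level $k+1$.

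\medskip

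\noindent\textbf{Main obstacle.} The only genuinely delicate point is the quadratic remainder estimate $\bm{r}\in\langle\e\rangle^{2^{k+1}}$: one must justify that the multivariate Taylor expansion of a polynomial map applied to power-series arguments is valid and that the second-order remainder is a $\Kbar[[\e]]$-combination of the pairwise products $\epsilon_i\epsilon_j$. This is routine — it follows from the formal Taylor expansion of polynomials, valid over any commutative ring — but it is where the "Newton doubles the precision" phenomenon actually lives, so I would state it carefully (e.g.\ via $p_i(\v^{(k)}+\bm\epsilon,\e) = p_i(\v^{(k)},\e) + \sum_j \partial_{x_j}p_i(\v^{(k)},\e)\,\epsilon_j + \sum_{j,l} c_{jl}\,\epsilon_j\epsilon_l$ with $c_{jl}\in\Kbar[[\e]]$) rather than wave at it. Everything else (well-definedness of the inverse, propagation of the constant term of the determinant) is immediate from the inductive hypothesis.
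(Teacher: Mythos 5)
Your proposal is correct and follows essentially the same route as the paper: induction on $k$, with the base case given by the hypotheses on $\balpha$, the congruence $\v^{(k+1)}-\v^{(k)}\equiv 0 \bmod \langle\e\rangle^{2^k}$ derived from the update rule plus part (b), and part (b) at level $k+1$ obtained from the first-order Taylor expansion in which the Newton step cancels the linear term, leaving a remainder in $\langle\e\rangle^{2^{k+1}}$. Your handling of (a) via agreement of the Jacobians modulo $\langle\e\rangle$ (rather than the paper's Taylor expansion of the determinant) and your explicit care with the quadratic remainder are only cosmetic refinements of the same argument.
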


            \begin{proof}
We prove these claims by induction on $k$. Base Case ($k = 0$):
For $k = 0$, since $\v^{(0)} = \balpha$, the claims follow directly from the assumptions: 
 $\balpha$ is a root of $\h(\x, 0, \dots, 0)$ and  the Jacobian matrix of $\h(\x, 0, \dots, 0)$ with respect to $\x$ has full rank at $\balpha$.  

Assume that $(${\bf a}$)$ and $(${\bf b}$)$ hold for some $k \geq 0$. We will show that they also hold for $k+1$.  First, from the update rule,
\begin{equation} \label{eq:jac_update}
\v^{(k+1)} - \v^{(k)} = -\jac(\v^{(k)}, \e)^{-1}
\begin{pmatrix}
p_1(\v^{(k)}, \e) \\
\vdots \\
p_n(\v^{(k)}, \e)
\end{pmatrix}.
\end{equation}
Using the induction hypothesis for $k$, we can deduce that
\begin{equation} \label{eq:convergencu_step}
\v^{(k+1)} - \v^{(k)} = 0 \mod \langle \e \rangle^{2^k}.
\end{equation}

\paragraph*{Proof of $(${\bf a}$)$ for $k+1$:}
Let $J$ denote the determinant of $\jac$, the Jacobian matrix of $\h$ with respect to $\x$. Using the Taylor expansion:
\begin{multline*}
J(\v^{(k+1)}, \e) = J(\v^{(k)}, \e) + \sum_{i=1}^n \frac{\partial J}{\partial x_i}(\v^{(k)}, \e) (\v^{(k+1)} - \v^{(k)}) \\
\mod \langle \v^{(k+1)} - \v^{(k)} \rangle^2,
\end{multline*}
where $\langle \v^{(k+1)} - \v^{(k)} \rangle$ is the ideal in $\Kbar(\e)$ generated by the components of $\v^{(k+1)} - \v^{(k)}$. 

By the induction hypothesis for $k$, we know $J(\v^{(k)}, \e)$ is non-zero in $\K[\e]/\langle \e \rangle$. Additionally, from \eqref{eq:convergencu_step}, $\v^{(k+1)} - \v^{(k)} = 0 \mod \langle \e \rangle$. Therefore, $J(\v^{(k+1)}, \e) \neq 0 \mod \langle \e \rangle$, proving $(${\bf a}$)$ for $k+1$.

\paragraph*{Proof of $(${\bf b}$)$ for $k+1$:}
Multiplying both sides of \eqref{eq:jac_update} by the gradient of $p_i$ with respect to $\x$, we obtain:
\[
\left(\frac{\partial p_i}{\partial x_1}, \dots, \frac{\partial p_i}{\partial x_n} \right) 
\left(\v^{(k+1)} - \v^{(k)}\right)^T = -p_i(\v^{(k)}, \e).
\]

Using the Taylor expansion of $p_i$ between $\v^{(k+1)}$ and $\v^{(k)}$, we have
\begin{multline*}
p_i(\v^{(k+1)}, \e) = p_i(\v^{(k)}, \e) + \sum_{j=1}^n 
\frac{\partial p_i}{\partial x_j}(\v^{(k)}, \e)(\v^{(k+1)} - \v^{(k)}) \\
\mod \langle \v^{(k+1)} - \v^{(k)} \rangle^2.
\end{multline*}

Combining these two facts, we conclude:
\[
p_i(\v^{(k+1)}, \e) = 0 \mod \langle \v^{(k+1)} - \v^{(k)} \rangle.
\]
Finally, together with \eqref{eq:convergencu_step}, this implies that
\[
p_i(\v^{(k+1)}, \e) = 0 \mod \langle \e \rangle^{2^{k+1}}.
\]
Thus, $(${\bf b}$)$ holds for $k+1$.
\end{proof}

Let us conclude the existence of the vector of power series $\v$ in $\Kbar[[\e]]^n$ that satisfies~\eqref{eq:satis}. As shown in~\eqref{eq:convergencu_step}, the equation 
\[
\v^{(k+1)} - \v^{(k)} = 0 \mod \langle \e \rangle^{2^k}
\]
holds for any $k \in \ZZ_{\ge 0}$. Thus, for any $i = 1, \dots, n$, the sequence of functions $\left(v_i^{(k)}\right)_{k \in \ZZ_{\ge 0}}$ converges to a power series $v_i$ in $\K[[\e]]$. 

Furthermore, by Lemma~\ref{lemma:convergent}$(${\bf b}$)$, we have:
\[
p_i(\v^{(k)}, \e) = 0 \mod \langle \e \rangle^{2^k}, \quad \text{for } i = 1, \dots, n \text{ and all } k \in \ZZ_{\ge 0}.
\]
This implies that $p_i(\v, \e) = 0$ in $\Kbar[[\e]]$ for all $i = 1, \dots, n$. 

Finally, the relation in~\eqref{eq:convergencu_step} also ensures that:
\[
\v(0, \dots, 0) = \balpha.
\]
This completes the proof of the existence of the vector $\v$ of power series.


\subsection{The cost of computing the truncations}

We now analyze the complexity of computing $\v^{\lceil\log_2(\delta)\rceil}$ from $\balpha$ and $\h$, completing the proof of Proposition~\ref{prop:newton}.  

\begin{lemma} 
Let $L$ be the complexity of computing $\h = (p_1, \dots, p_n)$. The complexity to compute the approximation $\v^{\lceil\log_2(\delta)\rceil}$ of $\balpha$ in $\Kbar[[\e]]$ with precision $\delta$ is 
\[
O((nL + n^4){\mathcal{M}}(\delta, m))
\]
operations in $\K$. 
\end{lemma}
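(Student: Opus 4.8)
The plan is to bound the cost of one Newton step and then multiply by the number of steps. In iteration $k$ we work modulo $\langle \e \rangle^{2^k}$; since the iteration runs for $k = 1, \dots, \lceil \log_2(\delta)\rceil$, the final precision is $\delta$, and at every intermediate step the truncation order is at most $\delta$, so each step costs at most $O(\text{(one step at precision }\delta))$. Because $\mathcal{M}(\delta,m)$ is super-additive in $\delta$ and there are only $O(\log \delta)$ steps, the total is $\softO$ of the cost of a single step at precision $\delta$; the $\softO$ absorbs the logarithmic number of iterations. So it suffices to analyze step $k$ at precision $\delta$.

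\medskip

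\noindent\textbf{Cost of one Newton step.} Fix an iteration and assume $\v^{(k-1)}$ is known to precision $\delta$ (each of its $n$ components is a truncated $m$-variate series). I would itemize the operations:
\begin{itemize}
  \item Evaluate $p_1(\v^{(k-1)},\e), \dots, p_n(\v^{(k-1)},\e)$ modulo $\langle\e\rangle^\delta$. Running the length-$L$ SLP for $\h$ with each $x_i$ replaced by the series $v^{(k-1)}_i$ and each $u_j$ kept as a variable, every SLP operation becomes either a scalar operation or a truncated $m$-variate series multiplication, so this costs $O(L\,\mathcal{M}(\delta,m))$.
  \item Evaluate the Jacobian matrix $\jac(\v^{(k-1)},\e)$, an $n\times n$ matrix of truncated series. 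The standard trick is to compute the SLP for all first-order partial derivatives of $\h$ with respect to $\x$: by the Baur--Strassen theorem the $n$ derivatives of a single $p_i$ are computed by an SLP of length $O(L)$, and doing this for all $n$ of the $p_i$ gives an SLP of length $O(nL)$; evaluating it at $\v^{(k-1)}$ costs $O(nL\,\mathcal{M}(\delta,m))$. (Alternatively one can simply note $n^2$ entries each obtained by a length-$O(L)$ SLP, giving $O(n^2 L\,\mathcal{M}(\delta,m))$; I will keep the weaker bound if Baur--Strassen is not already available in the paper, since it is still dominated by the $n^4$ term below.)
  \item Invert $\jac(\v^{(k-1)},\e)$ over the ring $\K[\e]/\langle\e\rangle^\delta$, or rather solve the linear system $\jac \cdot (\v^{(k)} - \v^{(k-1)})^T = -(p_1,\dots,p_n)^T$ over that ring. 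Here I use Lemma~\ref{lemma:convergent}$(\mathbf{a})$: the constant term of $\det\jac(\v^{(k-1)},\e)$ is nonzero, so $\jac$ is invertible over the local ring $\K[\e]/\langle\e\rangle^\delta$ and Gaussian elimination (or the adjugate/Cramer formula, or a power-series Newton iteration on the matrix inverse) goes through with no zero-pivot obstruction. This uses $O(n^3)$ ring operations, hence $O(n^3\,\mathcal{M}(\delta,m))$; if one uses the adjugate formula the count is $O(n^4)$ ring operations, i.e. $O(n^4\,\mathcal{M}(\delta,m))$, which matches the claimed bound.
  \item Form $\v^{(k)} = \v^{(k-1)} - \jac^{-1}(\dots)$: $O(n)$ series subtractions, negligible.
\end{itemize}
Adding these, one Newton step costs $O((nL + n^4)\mathcal{M}(\delta,m))$ operations in $\K$ (the $n^4$ term dominates the matrix work whichever variant one picks, and the $nL$ term covers both the evaluation of $\h$ and of its Jacobian).

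\medskip

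\noindent\textbf{Assembling.} Summing over the $\lceil\log_2(\delta)\rceil$ iterations and using that $\mathcal{M}(\delta,m)$ grows at least linearly so the number of iterations is absorbed into the $\softO$, the total cost is $\softO((nL+n^4)\mathcal{M}(\delta,m))$, which is the claimed bound (and in fact gives the slightly sharper $O((nL+n^4)\mathcal{M}(\delta,m)\log\delta)$). The main obstacle — and the point that actually needs Lemma~\ref{lemma:convergent} rather than just arithmetic bookkeeping — is justifying that the matrix $\jac(\v^{(k-1)},\e)$ is genuinely invertible over the truncated power-series ring at every step, so that the linear solve is well-defined and costs what a linear solve over a ring with units costs; everything else is a routine operation count on the SLP, the only subtlety being whether to invoke Baur--Strassen for the Jacobian (tightening $n^2L$ to $nL$) or to leave the cruder $n^2L$ bound, which is harmless since it is swamped by $n^4$.
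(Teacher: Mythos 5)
Your step-by-step accounting (Baur--Strassen for the $n^2$ partial derivatives in length $O(nL)$, evaluation of $\h$ and $\jac$ at $\v^{(k-1)}$ in $O(nL)$ ring operations, an $O(n^4)$ matrix inversion, each ring operation costing one $\mathcal{M}$) is the same as the paper's. But there is a quantitative gap: the lemma claims $O((nL+n^4)\mathcal{M}(\delta,m))$, with no logarithmic factor, and your proof does not reach it. By running \emph{every} Newton step at full precision $\delta$ and then multiplying by the $\lceil\log_2\delta\rceil$ iterations, you only obtain $O((nL+n^4)\mathcal{M}(\delta,m)\log\delta)$ --- which you describe as ``slightly sharper,'' but it is strictly weaker than the stated bound (it is the $\softO$ version of Proposition~\ref{prop:newton}, not the $O$ version of the lemma). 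The missing idea is precision doubling: because Lemma~\ref{lemma:convergent}({\bf b}) guarantees $p_i(\v^{(k)},\e)\equiv 0 \bmod \langle\e\rangle^{2^k}$, step $k+1$ only needs to be carried out modulo $\langle\e\rangle^{2^{k+1}}$, so its cost is $O((nL+n^4)\mathcal{M}(2^{k+1},m))$, and summing the geometric-like series $\sum_{k\le\lceil\log_2\delta\rceil}\mathcal{M}(2^{k+1},m)$ gives $O(\mathcal{M}(\delta,m))$ using the (super-linear) growth of $\mathcal{M}$; the last lift dominates and no $\log\delta$ appears. This adaptive truncation is exactly how the paper removes the factor you absorb into $\softO$.

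Two smaller points on the linear-algebra step. Your appeal to Lemma~\ref{lemma:convergent}({\bf a}) to justify invertibility over $\K[\e]/\langle\e\rangle^{\delta}$ is the right justification, and Gaussian elimination does work over this local ring (a unit pivot exists in each column since the determinant has nonzero constant term, and inverting a unit series at precision $\delta$ costs $O(\mathcal{M}(\delta,m))$). However, your claim that ``the adjugate formula'' gives $O(n^4)$ ring operations is not right as stated: computing the adjugate by cofactor expansion is far more expensive. The paper instead invokes Leverrier's algorithm (divisions only by the integers $1,\dots,n$, harmless in characteristic zero) or Berkowitz's division-free algorithm, both of which genuinely give the $O(n^4)$ ring-operation count; citing one of these (or sticking with your pivoted elimination argument) is what you need to make that bullet airtight.
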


\begin{proof} 
The complexity of computing the first partial derivatives $\left(\frac{\partial p_i}{\partial x_j}\right)_{1 \le i,j \le n}$ of $\h$ is $O(nL)$ operations in $\K$, as established by \cite[Theorem~1]{Baur1983complexity} or \cite[Lemma~25]{giusti1997lower}. Consequently, computing the Jacobian matrix $\jac$ of $\h$ with respect to $\x$ also requires $O(nL)$ operations in $\K$. 

Next, consider the computation of $\v^{(k+1)}$ given $\v^{(k)}$. Evaluating the Jacobian matrix $\jac$ and the vector $\h$ at $\v^{(k)}$ requires $O(nL)$ operations in $\K[[u_1, \dots, u_m]] / \langle \e \rangle^{2^{k+1}}$, using Baur-Strassen's algorithm~\cite{Baur1983complexity}. Additionally,
 to compute $\v^{(k+1)}$, one also needs to invert the matrix $\jac(\v^{(k)}, \e)$, which costs $O(n^4)$ operations in the same ring. This matrix inversion can be performed, for example, using Leverrier’s algorithm~\cite{le1840variations}, which involves some divisions by intergers \(\{1, \dots, n\}\), or alternatively using Berkowitz’s algorithm~\cite{berkowitz1984computing}, which avoids divisions entirely.

Finally, the cost of a single operation in the quotient ring \\ $\K[[u_1, \dots, u_m]] / \langle \e \rangle^{2^{k+1}}$ is $\mathcal{M}(2^{k+1}, m)$ operations in $\K$.
Therefore, the total cost to compute $\v^{(k+1)}$ from $\v^{(k)}$ is 
\[
O((nL + n^4){\mathcal{M}}(2^{k+1}, m))
\]
operations in $\K$.

Summing up over all iterations $k$ up to $\lceil \log_2(\delta) \rceil$, the total complexity to compute the approximation $\v^{\lceil\log_2(\delta)\rceil}$ of $\balpha$ is 
\[
O((nL + n^4) \sum_{k=0}^{\lceil \log_2(\delta) \rceil} {\mathcal{M}}(2^{k+1}, m)) = O((nL + n^4){\mathcal{M}}(\delta, m)),
\]
where the equality follows from the monotonicity property of $\mathcal{M}$. 
\end{proof}

 
\section{The Main Result} \label{sec:main}

Let $\R = \mathbb{K}[x_1, \dots, x_n]$ be a multivariate polynomial ring over a field $\mathbb{K}$. Consider $n$ algebraically independent elements $g_1, \dots, g_n$ in $\R$. Let $\S$ denote the subring of $\R$ generated by $g_1, \dots, g_n$, and let $h$ be an element of $\S$. Then, there exists a unique polynomial $f \in \mathbb{K}[u_1, \dots, u_n]$, where $\e = (u_1, \dots, u_n)$ are new variables, such that  
\[
h = f(g_1, \dots, g_n).
\]  
In this section, we present an algorithm for computing the polynomial $f$ and analyze its complexity.

\subsection{Full rank matrices}

We will need the following result to establish the correctness of our main algorithm.  

\begin{lemma}  \label{lemma:ipen}
Let \(\K\) be a field of characteristic zero and we let $\h = (p_1, \dots, p_n)$ be a sequence of polynomials in $\K[\x]$, with $\x = (x_1, \dots, x_n)$. Then there exists a non-empty Zariski open set 
$\mathscr{U} \subset \Kbar{}^n$ such that
for any $\balpha \in \mathscr{U}$, the Jacobian matrix of $\Q$ with
respect to $\x$ has full rank at $\balpha$, where   
\[
\Q = (p_1 - p_1(\balpha), \dots, p_n - p_n(\balpha))
\]  is a sequence of polynomials in \(\K[\x].\)\label{lemma:shift}  
\end{lemma}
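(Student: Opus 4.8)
The plan is to notice that subtracting constants in $\Q$ does not change the Jacobian, and then to invoke the classical Jacobian criterion for algebraic independence.

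\emph{Step 1: the shift is harmless.} For each $i$ the polynomial $p_i-p_i(\balpha)$ differs from $p_i$ only by the scalar $p_i(\balpha)\in\Kbar$, so $\partial\big(p_i-p_i(\balpha)\big)/\partial x_j=\partial p_i/\partial x_j$ for every $j$. Hence the Jacobian matrix of $\Q$ with respect to $\x$ coincides, as a matrix over $\K[\x]$, with the Jacobian matrix $\jac(\h)$ of $\h$ with respect to $\x$, and in particular their evaluations at $\balpha$ agree. So it suffices to exhibit a non-empty Zariski open set $\mathscr{U}\subseteq\Kbar^{n}$ on which $\det\jac(\h)$ does not vanish. (The only point of passing from $\h$ to $\Q$ is that $\balpha$ is then automatically a root of $\Q$, which is exactly the hypothesis under which $\Q$ — read with extra variables $u_i$ — can be fed into the ${\sf Lifting}$ algorithm of Proposition~\ref{prop:newton}.)

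\emph{Step 2: generic non-vanishing of $\det\jac(\h)$.} In the situation in which this lemma is applied, $p_1,\dots,p_n$ are the algebraically independent generators $g_1,\dots,g_n$, and I would use this independence. Since $\K$ has characteristic zero, the Jacobian criterion shows that $J:=\det\jac(\h)$ is a \emph{non-zero} element of $\K[\x]$. I would then put $\mathscr{U}:=\{\balpha\in\Kbar^{n}: J(\balpha)\neq 0\}$: this is the complement of the hypersurface $V(J)$, hence Zariski open, and it is non-empty because a non-zero polynomial cannot vanish identically on $\Kbar^{n}$ (the field $\Kbar$ being infinite). For any $\balpha\in\mathscr{U}$ one then has $\det\jac(\Q)(\balpha)=J(\balpha)\neq 0$, i.e. the Jacobian matrix of $\Q$ with respect to $\x$ has full rank $n$ at $\balpha$, which is the assertion.

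\emph{The main obstacle} is the Jacobian criterion itself: over a characteristic-zero field, $p_1,\dots,p_n\in\K[x_1,\dots,x_n]$ are algebraically independent over $\K$ if and only if $\det(\partial p_i/\partial x_j)\not\equiv 0$. I would cite this as a classical fact; the direction needed here (independence $\Rightarrow$ non-singular Jacobian) is the non-trivial one and follows from generic smoothness — the dominant polynomial map $\x\mapsto(p_1(\x),\dots,p_n(\x))$ is, in characteristic zero, generically \'etale, so its differential $\jac(\h)$ is invertible on a dense open set. The characteristic-zero hypothesis is essential (e.g. $x^{p}$ has vanishing derivative in characteristic $p$), and the independence of $p_1,\dots,p_n$ is genuinely needed as well (otherwise $\jac(\h)$ is singular everywhere, as for $p_1=p_2$), which is why it must enter in Step 2. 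Everything else is the elementary chain-rule computation of Step 1 and the standard fact that the non-vanishing locus of a non-zero polynomial over an infinite field is a non-empty Zariski open set.
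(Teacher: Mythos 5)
Your proof is correct, but it takes a genuinely different route from the paper. The paper does not invoke the Jacobian criterion at all: it introduces doubled variables $\z$, considers the system $\mathfrak{Q}=(p_1-z_1,\dots,p_n-z_n)$, observes that its Jacobian in $(\x,\z)$ is automatically of full rank (because of the $-\mathrm{Id}$ block in the $\z$-columns), and then applies Thom's weak transversality theorem to conclude that for a generic shift $\rho$ the Jacobian of $(p_1-\rho_1,\dots,p_n-\rho_n)$ has rank $n$ at \emph{every} root; the set $\mathscr{U}$ is then obtained as $\h^{-1}(\mathscr{O})$. Your argument instead reduces the statement, via the trivial observation that constants do not change derivatives, to the non-vanishing of $J=\det\jac(\h)$, which you get from the characteristic-zero Jacobian criterion applied to the algebraically independent $p_i=g_i$, and you take $\mathscr{U}=\{J\neq 0\}$. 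This is shorter and uses exactly the classical fact that the paper itself recalls (with references) right after the lemma; it also yields for free that $\jac(\Q)(\balpha)=\jac(\h)(\balpha)$ is invertible precisely on the complement of a hypersurface, which is all the {\sf Lifting} algorithm needs. What the paper's transversality route buys in exchange is a slightly stronger conclusion — full rank of the Jacobian at \emph{all} common roots of the shifted system for a generic shift, not just at the chosen point $\balpha$ — though that extra strength is not used in Algorithm~\ref{alg:representation}.

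One point worth making explicit: the lemma as stated omits the hypothesis that $p_1,\dots,p_n$ are algebraically independent, and without it the conclusion is false (take $p_1=p_2$). You are right to flag that the independence must enter, and your proof makes its role transparent. Note that the paper's own argument needs it too, implicitly: the claim that $\mathscr{U}:=\h^{-1}(\mathscr{O})$ is a \emph{dense} (in particular non-empty) Zariski open set requires the map $\x\mapsto(p_1(\x),\dots,p_n(\x))$ to be dominant, which is equivalent to the algebraic independence of the $p_i$. So your treatment of this hypothesis is, if anything, cleaner than the paper's.
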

\begin{proof}  
Let $\z = (z_1, \dots, z_n)$ be a set of new variables. We set
$\mathfrak{Q} = (p_1-z_1,
\dots, p_n-z_n)$, a polynomial system in $\K[\x, \z]$. For a point
$\rho = (\rho_1, \dots, \rho_n) \in \Kbar{}^n$, we denote by
$\Theta_\rho$ the mapping  
\begin{align*}
  \Theta_\rho : \K[\z][x_1, \dots, x_n] &\rightarrow \Kbar[x_1,
                                           \dots, x_n], \\ 
   \mathfrak{Q}  &\mapsto (p_1-\rho_1, \dots, p_n-\rho_n)
\end{align*}
by setting $\z$ equal to $\rho$. Note that, for any $\rho
\in \Kbar{}^n$, the Jacobian matrix of $\Theta_\rho(\mathfrak{Q})$
with respect to $\x$ equals that of $\Q$. 

Consider the mapping 
\begin{align*}
 \Theta : (\balpha, \rho) \in \Kbar{}^n \times \Kbar{}^n &\rightarrow
                                                            \Theta_\rho(\mathfrak{Q})(\balpha).       
\end{align*} 
Since the columns corresponding to partial derivatives of
$\mathfrak{Q}$ with respect to $\z$ contain a $\diag(-1, \dots, -1)$
matrix, the Jacobian matrix of $\mathfrak{Q}$ has full rank at
all points $(\balpha, \rho)$ of its zero set. In other words, $\mathbf{0}$
is a regular value of $\Theta$. 

By Thom's weak transversality theorem (see
e.g., \cite[Proposition B.3]{din2017nearly} for the algebraic version),
there exists a non-empty Zariski open set $\mathscr{O} \subset
\Kbar{}^n$ such that for $\rho \in \mathscr{O}$, $\mathbf{0}$ is a 
regular value of the induced mapping
\begin{align*}
     \balpha \in \Kbar{}^n  \rightarrow
  \Theta_\rho(\mathfrak{Q})(\balpha). 
\end{align*} 
That is, the Jacobian matrix of
$\Theta_\rho(\mathfrak{Q})$ has rank $n$ at any root $\balpha \in
\Kbar{}^n$ of $\Theta_\rho(\mathfrak{Q})$.  

As a consequence, for the dense Zariski open set $\mathscr{U} :=
\h^{-1}(\mathscr{O})$, we have for all points $\balpha \in
\mathscr{U}$ that the Jacobian of $\h$ evaluated at $\balpha$
has full rank. In addition, $\balpha$ is a root of $\Q$, and the
Jacobian of $\h$ is the same as that of $\Q = \h -
\P(\balpha)$. Altogether, we obtain our claim.
\end{proof}
Let \(\K\) be a field of characteristic zero or sufficiently large characteristic. We remark that there is a well-known result (the {Jacobian criterion}) which provides a simple condition for the algebraic independence of \( n \) polynomials \( f_1, \dots, f_n \) in \( n \) variables. Precisely, these polynomials are algebraically independent if and only if the determinant of their Jacobian matrix is nonzero. Equivalently, the Jacobian matrix has full rank over the function field if and only if \( f_1, \dots, f_n \) are algebraically independent—under the assumption that \(\K\) has characteristic zero or a sufficiently large characteristic. For more details, we refer the reader to \cite{humphreys1992reflection} for the case of characteristic zero, and to \cite{beecken2013algebraic} for the case of large characteristic.

It is also worth noting that the purpose of Lemma~\ref{lemma:ipen} is not to prove the algebraic independence of the polynomials. Rather, it establishes that the Jacobian matrix of these polynomials has full rank at any common zero of the system. As illustrated in Example~\ref{ex:count} below, although the polynomials \( (x_1 + x_2,\, x_1^2 + x_2^2) \) (the Newton power sums) are algebraically independent, their Jacobian matrix does not have full rank at their common solutions.

We remark that the result in Lemma~\ref{lemma:ipen} still holds when \(\K\) is a finite field with a sufficiently large number of elements.


\subsection{Our main idea}
\label{subsec:idea}
Our main approach involves eliminating the variables $\x$ from the system  
\[
\h = (g_1 - u_1, \dots, g_n - u_n) \in \K[\x, \e]^n
\]  
using linear algebra techniques, specifically Hensel-Newton lifting.

Assume there exists a point $\balpha = (\alpha_1, \dots, \alpha_n) \in \Kbar{}^n$ such that  
$\balpha$ is a solution of $\h(\x, 0, \dots, 0)$ and the Jacobian  
matrix of $\h$ with respect to $\x$ has full rank at  
$\balpha$. Then, by Proposition~\ref{prop:newton}, there exists a  
unique vector of power series $\v = (v_1, \dots, v_n) \in \K[[\e]]^n$  
such that  
\[
\v_1(0, \dots, 0) = \alpha_1, \dots, v_n(0, \dots, 0) = \alpha_n \] {\rm and} \begin{equation*}p_1(\v, \e) =  
\cdots =p_n(\v, \e) = 0.  
\end{equation*}  
In order to find ${f}$, we only need  
truncations of $(v_1, \dots, v_n)$ at precision $\Delta$, where \(\Delta\) is the degree of \(f\), which can be  
done by using the ${\sf Lifting}$ algorithm. We then finally evaluate  
$f$ at these truncated power series to obtain ${f}$.

However, for a root $\balpha$ of $\h(\x, 0, \dots, 0)$, the  
Jacobian matrix of $\h(\x, 0, \dots, 0)$ with respect to $\x$ is  
not always full rank at $\balpha$.  
\begin{example}  \label{ex:count}
Let us consider $n=2$ and $(g_1, g_2) = (x_1+x_2, x_1^2+x_2^2)$. Then $\h= (x_1+x_2-u_1, x_1^2+x_2^2-u_2)$, and the Jacobian matrix of $\h$ with respect to $(x_1, x_2)$ is  
\[  
\jac = \begin{pmatrix} 1 & 1 \\ 2x_1 & 2x_2 \end{pmatrix}.  
\]  
The point $\balpha = (0,0)$ is a solution of $\h(x_1, x_2, 0,0)$; however, the rank of $\jac$ at $\balpha$ is 1. \label{ex:not_full}  
\end{example}

In order to deal with this situation, we take a point $\balpha \in \mathscr{U}$, where $\mathscr{U}$ is the non-zempty Zariski open set defined in Lemma  \ref{lemma:shift} (that is  a generic
point in \(\Kbar{}^n\)). Then we compute a polynomial system  
\[
\Q := (g_1-u_1-g_1(\balpha), \dots, g_n - u_n-g_n(\balpha)) \in \K[\x, \e].
\] In this way, the point $\balpha$ is a root of $\Q(\x, 0, \dots, 0)$, and the Jacobian matrix of $\Q(\x, 0, \dots, 0)$ with respect to $\x$ is full rank at $\balpha$ by Lemma~\ref{lemma:shift}.  

\begin{example}  
Let us continue Example~\ref{ex:not_full}. We take a generic point $\balpha = (-1, 3)$ in $\Kbar{}^2$. Then $\Q = (x_1+x_2-u_1-2, x_1^2+x_2^2-u_2-10)$. The Jacobian matrix  
\[  
\begin{pmatrix} 1 & 1 \\ 2x_1 & 2x_2 \end{pmatrix}  
\]  
with respect to $(x_1, x_2)$ of $\Q$ has full rank $2$ at $\balpha$.  
\end{example}

Then, if $\v^{\lceil \log_2(d)\rceil}$ is the truncation of the vector of power series $\v$, and the polynomial $\ell$—which equals ${f}(u_1 - g_1(\balpha), \dots, u_n - g_n(\balpha))$—is the evaluation of $h$ at $\v^{\lceil \log_2(d)\rceil}$, we apply the translation  
\[
(u_i)_{1 \le i \le n} \leftarrow (u_i + g_i(\balpha))_{1 \le i \le n}
\]  
to obtain the polynomial ${f}$.


\subsection{The main algorithm (proof of Theorem \ref{main:thm})}
Our main result in this paper is as follows. For convenience, we restate the main theorem here. Recall that the field \(\K\) is of characteristic zero.
\begin{theorem} \label{thm:main_2}
 Then there exists a randomzied  algorithm algorithm ${\sf Representation}$ which, given $(g_1, \dots, g_n)$ and $f$, returns $f$ in $\K[u_1, \dots, u_n]$ such that  
$f(g_1, \dots, g_n) = h$,
using $${\softO\big((nL_1 + n^4+L_2){\mathcal{M}}(\Delta, n)}\big)$$  operations in $\K$, where $L_1$ and $L_2$ are the lengths of straight-line programs computing $(g_1, \dots, g_n)$ and $h$ respectively, and $\Delta$ is a bound on the degree of $g$. 
\end{theorem}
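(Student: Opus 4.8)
The plan is to assemble the algorithm ${\sf Representation}$ from the pieces already developed in the previous sections and then track the cost of each piece. First I would describe the algorithm itself: (1) draw a point $\balpha = (\alpha_1,\dots,\alpha_n)$ at random from $\Kbar{}^n$ (or, to stay over $\K$, from a large enough grid in $\K^n$), so that with high probability $\balpha$ lies in the non-empty Zariski open set $\mathscr{U}$ of Lemma~\ref{lemma:shift}; (2) form the shifted system $\Q = (g_1 - u_1 - g_1(\balpha),\dots, g_n - u_n - g_n(\balpha)) \in \K[\x,\e]^n$, which still admits a straight-line program of length $O(L_1 + n)$ since evaluating the constants $g_i(\balpha)$ costs $O(L_1)$ once; (3) observe that $\balpha$ is a root of $\Q(\x,0,\dots,0)$ and, by Lemma~\ref{lemma:shift}, the Jacobian of $\Q(\x,0,\dots,0)$ with respect to $\x$ has full rank at $\balpha$, so Proposition~\ref{prop:newton} applies with $m = n$; (4) call ${\sf Lifting}(\Q, \balpha, \Delta)$ to obtain $\v^{(\lceil\log_2\Delta\rceil)} = \v \bmod \langle\e\rangle^\Delta$, the truncation of the unique power-series solution; (5) evaluate the straight-line program for $h$ at $\v^{(\lceil\log_2\Delta\rceil)}$, truncating modulo $\langle\e\rangle^\Delta$, to get a polynomial $\ell \in \K[\e]$ of degree $\le \Delta$; (6) perform the substitution $u_i \leftarrow u_i + g_i(\balpha)$ in $\ell$ and return the result as $f$.

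Next I would argue correctness. Writing $h = f(g_1,\dots,g_n)$ with $f \in \K[u_1,\dots,u_n]$ (existence and uniqueness being given), the power series $\v$ satisfies $g_i(\v) = u_i + g_i(\balpha)$ in $\K[[\e]]$ for each $i$, by construction of $\Q$ and the defining property of $\v$. Hence $h(\v) = f(g_1(\v),\dots,g_n(\v)) = f(u_1 + g_1(\balpha),\dots, u_n + g_n(\balpha))$ as power series, and since $f$ has degree $\le \Delta$ this identity already holds after truncation modulo $\langle\e\rangle^\Delta$: the truncation of $h(\v)$ equals $f(u_1+g_1(\balpha),\dots,u_n+g_n(\balpha))$ exactly (both sides being polynomials of degree $\le\Delta$). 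Therefore $\ell = f(u_1 + g_1(\balpha),\dots)$, and the back-substitution $u_i \leftarrow u_i + g_i(\balpha)$ recovers $f$ exactly. The only source of failure is the choice of $\balpha$ missing $\mathscr{U}$; by the Schwartz--Zippel lemma this happens with controllably small probability when $\balpha$ is sampled from a set of size polynomial in the degrees involved, which justifies the ``randomized'' qualifier.

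For the complexity I would add up: computing $g_i(\balpha)$ for all $i$ and assembling the SLP for $\Q$ costs $O(L_1 + n)$ operations in $\K$; the call to ${\sf Lifting}(\Q,\balpha,\Delta)$ costs $\softO((nL_1' + n^4)\mathcal{M}(\Delta,n))$ by Proposition~\ref{prop:newton}, where $L_1' = O(L_1 + n)$ is the length of the SLP for $\Q$, so this term is $\softO((nL_1 + n^4)\mathcal{M}(\Delta,n))$; evaluating the length-$L_2$ SLP for $h$ at the truncated power series $\v^{(\lceil\log_2\Delta\rceil)}$ costs $O(L_2)$ operations in $\K[[\e]]/\langle\e\rangle^\Delta$, i.e. $O(L_2\,\mathcal{M}(\Delta,n))$ operations in $\K$; and the final translation of a degree-$\le\Delta$ polynomial in $n$ variables costs $\softO(\mathcal{M}(\Delta,n))$ (e.g. variable by variable). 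Summing gives $\softO((nL_1 + n^4 + L_2)\mathcal{M}(\Delta,n))$ operations in $\K$, as claimed.

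The main obstacle I anticipate is not any single estimate but the bookkeeping around the genericity step: one must be careful that the randomly chosen $\balpha$ can be taken in $\K^n$ rather than $\Kbar{}^n$ (so the whole computation stays over $\K$), that the bad locus $\mathscr{U}^c$ has bounded degree so Schwartz--Zippel gives a usable failure bound, and that the auxiliary constants $g_i(\balpha)$ do not inflate the SLP length beyond $O(L_1 + n)$. A secondary subtlety is confirming that the degree bound $\Delta$ on $f$ is available as an input (or can be over-estimated cheaply), since the lifting precision must be chosen from it; in the pseudo-reflection-group setting this is supplied by the analysis in Section~\ref{subsec:invariant}, and in general one may take $\Delta = \deg h$ when the $g_i$ are homogeneous, or iterate, doubling $\Delta$ until the identity $f(g_1,\dots,g_n) = h$ is verified.
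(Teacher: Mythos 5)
Your proposal is correct and follows essentially the same route as the paper: choose a generic shift point $\balpha$ via Lemma~\ref{lemma:shift}, lift the shifted system $\Q$ with Proposition~\ref{prop:newton}, evaluate $h$ at the truncated series, and undo the shift, with the same cost accounting for each step. The only cosmetic difference is the final translation step (you do it coefficient-wise on $\ell$, the paper edits the SLP for $\ell$), and both fit within the stated bound.
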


\begin{algorithm}[h] 	 
  \caption{${\sf Representation}\left((g_1, \dots, g_n), h, \Delta \right)$} 
  \label{alg:representation}

  \begin{flushleft}
  {\bf Input:}
  \begin{itemize}
    \item $n$ algebraically independent polynomials $g_1, \dots, g_n$ in $\R$
    \item a polynomial $h \in \K[g_1, \dots, g_n]$
    \item a positive integer \(\Delta\) as the degree bound for the output
  \end{itemize}

  {\bf Output:} The polynomial $f \in \K[u_1, \dots, u_n]$ such that
    \[
    h = f(g_1, \dots, g_n)
    \]
  \end{flushleft}

  \begin{enumerate}
    \item Take a generic point $\balpha = (\alpha_1, \dots, \alpha_n) \in   \Kbar{}^n$. \label{step:lstart} 
    
    \item Define the polynomials $\Q = (g_1 - u_1 - g_1(\balpha), \dots, g_n - u_n - g_n(\balpha))$ in $\K[x_1, \dots, x_n, u_1, \dots, u_n]$.
    \item Compute $(v_1, \dots, v_n) = {\sf Lifting}(\Q, \balpha, \Delta) \in \K[[u_1, \dots, u_n]]^n$. \label{step:l}
    \item Compute $h(v_1, \dots, v_n)$ and truncate the result at degree $\Delta$ to obtain a polynomial $\ell$. \label{step:e}
    \item Return $\ell(u_1 + g_1(\balpha), \dots, u_n + g_n(\balpha))$. \label{setp:f}
  \end{enumerate}
\end{algorithm}

\begin{proof}
The correctness of the algorithm follows from Subsection~\ref{subsec:idea}. It remains to establish a complexity analysis of the algorithm.

We first evaluate \(g_1, \dots, g_n\) at \(\balpha\) using a straight-line program (SLP) of length \(L_1\), which requires \(O(L_1)\) operations in \(\K\). At the core of the algorithm, specifically at Step~\ref{step:l}, we compute the truncated power series \((v_1, \dots, v_n)\) using the \textsf{Lifting} algorithm from Proposition~\ref{prop:newton}. This step takes  
\[
\softO((nL_1 + n^4)\,{\mathcal{M}}(\Delta, n))
\]  
operations in \(\K\).

Next, we evaluate \(h\) at these truncated power series in Step~\ref{step:e}. This requires \(O(L_2)\) operations on \(n\)-variate power series truncated at total degree \(\Delta\), yielding a total cost of  
\(
O\big(L_2\,\mathcal{M}(\Delta, n)\big)
\)  
operations in \(\K\). This step produces the polynomial  
\[
\ell := f(u_1 - g_1(\balpha), \dots, u_n - g_n(\balpha)).
\]  
Moreover, a straight-line program of length \(L_1 + n + L_2\) can be used to compute \(\ell\): we simply add \(n\) additional nodes to implement the translation from \(u_i\) to \(u_i + g_i(\balpha)\), and as discussed above, the values \(\{g_i(\balpha)\}_{i=1}^n\) can be computed via an SLP of length \(L_1\).

We then apply the translation \((u_i)_{1 \le i \le n} \leftarrow (u_i + g_i(\balpha))_{1 \le i \le n}\) to \(\ell\) in order to recover \(f\). To perform this translation, we modify the operations in the SLP representing \(\ell\), replacing each occurrence of \(u_i\) with \(u_i + g_i(\balpha)\). Since each instruction involving a variable can be updated in constant time, and the SLP for \(\ell\) has length \(L_1 + n + L_2\), this step requires  
\[
O(L_1 + n + L_2)
\]  
operations in \(\K\). Once the translation is applied, we evaluate the resulting SLP to obtain the final polynomial \(f\). Since each instruction is evaluated once, this again takes \(O(L_1 + n + L_2)\) operations, which is negligible compared to the preceding steps.

Thus, the total cost of computing the polynomial \(f\) is  
\[
\softO\big((nL_1 + n^4 + L_2)\,{\mathcal{M}}(\Delta, n)\big)
\]  
operations in \(\K\), as claimed.
\end{proof}

Note that our algorithm is randomized (or probabilistic) in the sense that it makes random choices of points (i.e., \(\balpha\)) which lead to correct computations; these points lie outside certain Zariski closed subsets of suitable affine spaces (see Lemma~\ref{lemma:ipen}). In this sense, our algorithm is of the Monte Carlo type (see, e.g., \cite{Gat03}), meaning that it returns the correct output with high probability---at least a fixed value greater than \(1/2\). The error probability can be made arbitrarily small by repeating the algorithm multiple times.

Although we do not estimate the exact probability that our algorithm is correct, it can be controlled using the Schwartz--Zippel lemma (see, e.g., \cite{schwartz1980fast} or \cite[Lemma~6.44]{Gat03}). More precisely, we define sufficiently large finite subsets of integers whose cardinalities depend on the degrees of the polynomials defining the aforementioned Zariski closed sets, and we then make the necessary random choices uniformly from these sets.

Finally, it is worth noting that although the input of Algorithm~\ref{alg:representation} requires a positive integer \(\Delta\) as a degree bound for the output polynomial \(f\), we can perform Steps~\ref{step:lstart} to~\ref{step:l} using the {\sf Lifting} procedure with successively doubling integers, such as \(2, 4, 8, \dots\). After each {\sf Lifting} step, we can check whether the polynomial obtained at Step~\ref{setp:f} matches the desired output \(f\). It is not difficult to verify that this adaptive approach preserves the overall complexity stated in Theorem~\ref{thm:main_2}.

\section{Some  degree bounds}

\label{subsec:invariant}
Since the complexity of the main algorithm depends on the degree of the polynomial $f$, this final section is devoted to providing bounds on the degree of $f$ in terms of $h$ and $g_1, \dots, g_n$.

\subsubsection*{\bf Weighted Homogeneous Polynomials}  

Weighted degrees generalize the standard notion of polynomial degree and naturally appear in many domains, especially when dealing with polynomial invariants under group actions. They help track transformation properties and have been extensively studied in various contexts. For example, weighted degrees play a crucial role in Gröbner bases for weighted homogeneous systems \cite{FSV15} and in weighted B\'ezout’s theorem \cite{damon1997global}, which provides bounds on the number of isolated solutions. Additionally, they have been used to speed up the complexity of solving certain polynomial systems \cite{FLSSV2021, labahn2021homotopy}.

Given a tuple of \( n \) positive integers \( W = (w_1, \dots, w_n) \in \mathbb{N}^n_{>0} \), the \textit{\( W \)-degree} of a monomial \( x_1^{k_1} \cdots x_n^{k_n} \) is defined as  
\[
\deg_W(x_1^{k_1} \cdots x_n^{k_n}) = \sum_{i=1}^{n} w_i k_i.
\]  
A polynomial is called \textit{\( W \)-homogeneous} if all its monomials have the same \( W \)-degree. In other words, a polynomial \( g \) is \( W \)-homogeneous of \( W \)-degree \( d \) if and only if  
\[
\mathrm{hom}_W(g) := g(x_1^{w_1}, \dots, x_n^{w_n})
\]  
is a homogeneous polynomial of degree \( d \).

\begin{example} \label{ex_weight}
Let \( W = (5,2) \). The polynomials \( g_1 = x_1^2 + x_2^5 \) and \( g_2 = x_2 \) are \( W \)-homogeneous of \( W \)-degrees \( 10 \) and \( 2 \), respectively.  
\end{example}

\begin{lemma} \label{lemma:weight}
    Let \(W = (w_1, \dots, w_n) \in \mathbb{N}^n_{>0}\) and
     \( g_1, \dots, g_n \) be a sequence of \( W \)-homogeneous, algebraically independent polynomials in \( \mathbb{K}[x_1, \dots, x_n] \). Let \( \mathcal{S} = \mathbb{K}[g_1, \dots, g_n] \), and suppose that \( h \in \mathcal{S} \). 

    Let \( f \) be the unique polynomial in \( \mathbb{K}[u_1, \dots, u_n] \) such that
    \[
    h = f(g_1, \dots, g_n).
    \]
    Then,  
    \[
    \deg_{W'}(f) = \deg_W(h),
    \]
    where \( W' = (\deg_W(g_1), \dots, \deg_W(g_n)) \).
\end{lemma}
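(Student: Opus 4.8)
The plan is to reduce the weighted-homogeneous statement to the ordinary homogeneous case via the substitution $x_i \mapsto x_i^{w_i}$, and then exploit uniqueness of the polynomial representation together with a grading argument. First I would observe that the map $\mathrm{hom}_W \colon g \mapsto g(x_1^{w_1}, \dots, x_n^{w_n})$ is an injective $\K$-algebra homomorphism from $\K[x_1, \dots, x_n]$ to itself that sends a $W$-homogeneous polynomial of $W$-degree $e$ to an ordinary homogeneous polynomial of degree $e$. Applying $\mathrm{hom}_W$ to the identity $h = f(g_1, \dots, g_n)$ gives
\[
\mathrm{hom}_W(h) = f\big(\mathrm{hom}_W(g_1), \dots, \mathrm{hom}_W(g_n)\big),
\]
because $\mathrm{hom}_W$ is a ring homomorphism and fixes the coefficients of $f$. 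Set $\tilde g_i := \mathrm{hom}_W(g_i)$, which is homogeneous of degree $d_i := \deg_W(g_i)$, and $\tilde h := \mathrm{hom}_W(h)$, which is homogeneous of degree $\deg_W(h)$ (note that $h \in \S$ is automatically $W$-homogeneous since it equals $f(g_1,\dots,g_n)$ — or, if $h$ is not $W$-homogeneous, one argues componentwise on its $W$-homogeneous parts and uses that the $g_i$ being $W$-homogeneous forces each $W$-homogeneous component of $h$ to lie in $\S$).

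Next I would introduce the weighted grading on $\K[u_1, \dots, u_n]$ given by $\deg_{W'}(u_i) = d_i$, and write $f = \sum_{j} f_j$ where $f_j$ is the $W'$-homogeneous component of $W'$-degree $j$. The key point is that substituting $u_i \mapsto \tilde g_i$ into a $W'$-homogeneous polynomial $f_j$ of $W'$-degree $j$ produces an ordinary homogeneous polynomial of degree $j$ in $\K[x_1, \dots, x_n]$: indeed each monomial $u_1^{k_1}\cdots u_n^{k_n}$ with $\sum k_i d_i = j$ maps to $\tilde g_1^{k_1}\cdots \tilde g_n^{k_n}$, which is homogeneous of degree $\sum k_i d_i = j$. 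Hence $\tilde h = \sum_j f_j(\tilde g_1, \dots, \tilde g_n)$ is the decomposition of the homogeneous polynomial $\tilde h$ into homogeneous pieces, one in each degree $j$. Since $\tilde h$ is homogeneous of degree $D := \deg_W(h)$, all components $f_j(\tilde g_1, \dots, \tilde g_n)$ with $j \neq D$ must vanish, and $f_D(\tilde g_1, \dots, \tilde g_n) = \tilde h$.

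To conclude $\deg_{W'}(f) = D$ I must show $f_j = 0$ for $j \neq D$, which follows from algebraic independence: since $g_1, \dots, g_n$ are algebraically independent, so are $\tilde g_1, \dots, \tilde g_n$ (the substitution $x_i \mapsto x_i^{w_i}$ is a dominant map, so it cannot introduce algebraic relations — more concretely, a nontrivial relation $R(\tilde g_1, \dots, \tilde g_n) = 0$ would give $\mathrm{hom}_W(R(g_1,\dots,g_n)) = 0$, hence $R(g_1, \dots, g_n) = 0$ by injectivity of $\mathrm{hom}_W$, contradicting independence). Therefore $f_j(\tilde g_1, \dots, \tilde g_n) = 0$ forces $f_j = 0$ as a polynomial in $\K[u_1, \dots, u_n]$. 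Thus $f = f_D$ is $W'$-homogeneous of $W'$-degree $D = \deg_W(h)$, giving $\deg_{W'}(f) = \deg_W(h)$ as claimed.

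\textbf{Main obstacle.} The routine parts are the homomorphism bookkeeping; the one genuinely load-bearing step is the transfer of algebraic independence through the substitution $x_i \mapsto x_i^{w_i}$, i.e.\ that $\tilde g_1, \dots, \tilde g_n$ remain algebraically independent. The argument above via injectivity of $\mathrm{hom}_W$ on $\K[x_1,\dots,x_n]$ handles it cleanly, but one should state explicitly why $\mathrm{hom}_W$ is injective (it is: distinct monomials $x^{k}$ have distinct images $x^{(w_1 k_1, \dots, w_n k_n)}$ since the $w_i$ are positive, so no cancellation occurs across a polynomial). A secondary subtlety worth a sentence is justifying that $h$ may be taken $W$-homogeneous without loss of generality — this is immediate from $h = f(g_1, \dots, g_n)$ only after one knows $f$ is $W'$-homogeneous, so it is cleanest to avoid circularity by running the grading argument on the full $f$ as above rather than assuming homogeneity of $h$ at the outset.
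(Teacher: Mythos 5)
Your argument is correct in substance, and it takes a genuinely different route from the paper. The paper stays entirely inside the $W$-grading: writing $f=\sum_{\mathbf a}c_{\mathbf a}u_1^{a_1}\cdots u_n^{a_n}$, it splits $h=\sum_{\mathbf a}c_{\mathbf a}g_1^{a_1}\cdots g_n^{a_n}$ into $h_{\mathrm{high}}+h_{\mathrm{low}}$, where $h_{\mathrm{high}}$ collects the products $g_1^{a_1}\cdots g_n^{a_n}$ of maximal $W$-degree, and uses algebraic independence only to guarantee $h_{\mathrm{high}}\neq 0$, so that $\deg_W(h)$ equals that maximal value, which is by definition $\deg_{W'}(f)$. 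You instead transport everything to the ordinary grading via $\mathrm{hom}_W\colon x_i\mapsto x_i^{w_i}$ and run the cancellation argument degree by degree on the $W'$-homogeneous components $f_j$ of $f$; the price is the extra (easy, and correctly handled) verification that $\mathrm{hom}_W$ is injective and that the $\tilde g_i$ remain algebraically independent, which the paper's direct argument never needs. Both proofs hinge on the same load-bearing fact: independence of the $g_i$ prevents the top-weight part from cancelling.

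One assertion in your write-up is false and should be removed: $h\in\mathcal{S}$ is \emph{not} automatically $W$-homogeneous (take $W=(1,1)$, $g_1=x_1+x_2$, $g_2=x_1x_2$, $f=u_1+u_2$), and consequently the stated conclusion that ``$f=f_D$ is $W'$-homogeneous'' is too strong for general $h$. The repair is exactly what you gesture at in your closing paragraph, and it costs nothing: do not assume $\tilde h$ is homogeneous. Since the nonzero terms $f_j(\tilde g_1,\dots,\tilde g_n)$ are homogeneous of pairwise distinct degrees $j$, and $f_j\neq 0$ implies $f_j(\tilde g_1,\dots,\tilde g_n)\neq 0$ by independence of the $\tilde g_i$, the identity $\tilde h=\sum_j f_j(\tilde g_1,\dots,\tilde g_n)$ gives directly $\deg(\tilde h)=\max\{j: f_j\neq 0\}=\deg_{W'}(f)$, while $\deg(\tilde h)=\deg_W(h)$ because $\mathrm{hom}_W$ sends distinct monomials to distinct monomials of the corresponding degrees. (Your alternative componentwise reduction, using that each $W$-homogeneous component of $h$ lies in $\mathcal{S}$, also works.) With that adjustment the proof is complete.
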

\begin{proof}
    Let  \( W' = (w_1', \dots, w_n') \) and \[ f = \sum_{\a = (a_1, \dots, a_n)} \, c_\a\, u_1^{a_1} \cdots u_n^{a_n}. \]Since \( h = f(g_1, \dots, g_n) \), we have 
    \[
    h = \sum_{\a} \, c_\a\, g_1^{a_1} \cdots g_n^{a_n}.
    \]
    Since \( g_i \)'s are \( W \)-homogeneous, the polynomial \( h \) can be written as
    \[
    h = h_{\rm high} + h_{\rm low},
    \]
    where \( h_{\rm high} = \sum_{\a'} \, c_{\a'}\, g_1^{a'_1} \cdots g_n^{a'_n} \) is the sum of all terms (in \( g_i \)'s) in \( h \) such that the degree of each term \( g_1^{a_1} \cdots g_n^{a_n} \) reaches the maximum \( W \)-degree among all terms in \( h \). In other words, \( \deg_W(g_1^{a'_1} \cdots g_n^{a'_n}) \) is maximized for all \( \a' = (a_1', \dots, a_n') \), that is,
    \[
    \deg_W(g_1^{a'_1} \cdots g_n^{a'_n})  = \max_{ (a_1, \dots, a_n)} \left( a_1 \deg_W(g_1) + \cdots + a_n \deg_W(g_n) \right).
    \]
    
    Moreover, since the \( g_i \)'s are algebraically independent, \( h_{\rm high} \) is a non-zero polynomial. Therefore, we have:
    \[
    \deg_W(h) = \deg_W(h_{\rm high}) \ {\rm and} \  \deg_W(h_{\rm high}) =   \deg_W\left(g_1^{a'_1} \cdots g_n^{a'_n}\right).
    \]    
    Thus, the degree of \( h \) satisfies:
    \[
    \deg_W(h) = \max_{(a_1, \dots, a_n)} \left( a_1 \deg_W(g_1) + \cdots + a_n \deg_W(g_n) \right),
    \]
    which simplifies to:
    \[
    \max_{(a_1, \dots, a_n)} \left( a_1 w_1' + \cdots + a_n w_n' \right),
    \]
    which by definition equals \( \deg_{W'}(f) \).
\end{proof}

\begin{example}
    Consider \( W \), \( g_1 \), and \( g_2 \) as in Example \ref{ex_weight}, and let \( h = x_1^2 \) (so that \( \deg_W(h) = 10 \)). Then \(W' = (10, 2)\) and \( f = u_1 - u_2^5 \) has \( \deg_{W'}(f) = 10 \), which implies that \(\deg_{W'}(f) = \deg_W(h)\). 
\end{example}

\subsubsection*{\bf Polynomials Invariant Under Finite  Groups} Polynomials invariant under finite groups play a crucial role in many areas of mathematics and applications in science and engineering.  We now discuss a degree bound for $f$ when $h$ is invariant under the action of a finite group.

Let \( GL(n, \mathbb{K}) \) be the set of all invertible \( n \times n \) matrices with entries in \( \mathbb{K} \), and let \( \calG \subset GL(n, \mathbb{K}) \) be a finite matrix group. A polynomial \( h \in \mathbb{K}[x_1, \dots, x_n] \) is called \(\calG\)-$invariant$ if
\[
h = h(A \cdot \mathbf{x}) \quad \text{for all } A \in \calG.
\]
The set of all \(\calG\)-invariant polynomials is denoted by \( \mathbb{K}[x_1, \dots, x_n]^\calG \). It is known that \( \mathbb{K}[x_1, \dots, x_n]^\calG \) is finitely generated by homogeneous polynomials (see, for example, \cite[Theorem 5, Section 3, Chapter 7]{CLO07}). 

A well-known example is when \( \calG = S_n \), the symmetric group of permutation matrices. In this case, the ring of invariant polynomials \( \mathbb{K}[x_1, \dots, x_n]^{S_n} \) consists of all symmetric polynomials. This subring  of \( \K[x_1, \dots, x_n] \) is generated by elementary symmetric polynomials, power sum symmetric polynomials (over any field of characteristic 0), Schur polynomials, and complete homogeneous symmetric polynomials.

If \( \calG \) is a finite subgroup of \( GL(n, \mathbb{K}) \) and \( A \in \calG \), then \( A \) is called a \textit{pseudo-reflection} if it has an eigenvalue 1 of multiplicity \( n-1 \) and another eigenvalue of multiplicity 1.

\begin{theorem}[\cite{shephard1954finite, chevalley1955invariants, serre1965alg}]
     There exist \( n \) algebraically independent homogeneous invariants \( g_1, \dots, g_n \) such that \( \mathbb{K}[x_1, \dots, x_n]^\calG = \mathbb{K}[g_1, \dots, g_n] \) if and only if \( \calG \) is generated by pseudo-reflections. 
\end{theorem}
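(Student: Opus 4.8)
The plan is to prove the classical Chevalley--Shephard--Todd theorem by treating its two implications separately, in each case reducing to a Hilbert-series computation after setting up the appropriate ring-theoretic framework. Throughout, write $R = \K[x_1,\dots,x_n]$ with its standard grading, let $R^\calG_+$ be the ideal of positive-degree invariants, and use the Reynolds operator $\pi\colon R \to R^\calG$, $\pi(p) = \frac{1}{|\calG|}\sum_{A \in \calG} p(A\cdot\x)$, which is well defined because $\mathrm{char}\,\K = 0$.

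For the implication \textbf{``generated by pseudo-reflections $\Rightarrow$ $R^\calG$ is a polynomial ring''}, I would first choose homogeneous $g_1,\dots,g_m \in R^\calG_+$ whose residues form a $\K$-basis of $R^\calG_+/(R^\calG_+)^2$, i.e.\ a minimal homogeneous generating set of the ideal $R^\calG_+$ of the ring $R^\calG$. A graded induction on degree combined with $\pi$ then shows $R^\calG = \K[g_1,\dots,g_m]$: given homogeneous $f \in R^\calG$ of degree $d > 0$, write $f = \sum_i h_i g_i$ in $R$ and apply $\pi$ to obtain $f = \sum_i \pi(h_i)\, g_i$ with each $\pi(h_i) \in R^\calG$ of degree $< d$. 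It then remains to prove $m = n$ and that the $g_i$ are algebraically independent. One inequality is immediate: $R$ is integral over $\K[g_1,\dots,g_m]$ and $\dim R = n$, so $m \ge n$. The reverse inequality is the crux, and I would derive it from \emph{Chevalley's divisibility lemma}: if $A \in \calG$ is a pseudo-reflection with reflecting hyperplane $\{\ell_A = 0\}$, then $\ell_A$ divides $p - p(A\cdot\x)$ for every $p \in R$. Averaging relations of this form over a generating set of pseudo-reflections yields a syzygy statement: if $a_1,\dots,a_k \in R^\calG$ are homogeneous with $a_1 \notin (a_2,\dots,a_k)R^\calG$, then any homogeneous relation $\sum_i a_i b_i = 0$ with $b_i \in R$ forces $b_1 \in R^\calG_+\, R$. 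Applying this to the partial derivatives of a hypothetical minimal-degree algebraic relation $P(g_1,\dots,g_m) = 0$ — noting that, since $\mathrm{char}\,\K = 0$, Euler's identity prevents all $\partial P/\partial u_j$ from vanishing at $(g_1,\dots,g_m)$ — contradicts the minimality of the chosen generating set. Hence no such relation exists, $m = n$, and $R^\calG \cong \K[g_1,\dots,g_n]$.

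For the \textbf{converse}, suppose $R^\calG = \K[g_1,\dots,g_n]$ with $\deg g_i = d_i$. Since $\mathrm{char}\,\K = 0$, the finite graded module $R$ over the polynomial ring $R^\calG$ is Cohen--Macaulay of the same Krull dimension $n$, hence free; comparing Hilbert series gives $\frac{1}{(1-t)^n} = \frac{Q(t)}{\prod_i (1-t^{d_i})}$ for some $Q \in \ZZ_{\ge 0}[t]$ with $Q(1)$ equal to the free rank. Independently, Molien's formula gives $\mathrm{Hilb}(R^\calG, t) = \frac{1}{|\calG|}\sum_{A \in \calG} \frac{1}{\det(1-tA)}$. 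Expanding both sides in powers of $(1-t)$ around $t = 1$, the leading term forces $\prod_i d_i = |\calG|$, and the coefficient of $(1-t)^{-(n-1)}$ forces $\sum_i (d_i - 1)$ to equal the number of pseudo-reflections in $\calG$ (only elements whose fixed space has codimension one contribute at that order, after pairing each $A$ with $A^{-1}$); in particular $\calG$ contains a pseudo-reflection unless it is trivial. To strengthen this to ``$\calG$ is generated by pseudo-reflections'', let $\calG'$ be the normal subgroup of $\calG$ generated by all its pseudo-reflections (normal because conjugates of pseudo-reflections are pseudo-reflections). By the first implication $R^{\calG'} = \K[g_1',\dots,g_n']$ is again a polynomial ring, and the same two identities give $\prod_i \deg g_i' = |\calG'|$ and $\sum_i(\deg g_i' - 1) = \#\{\text{pseudo-reflections of }\calG'\} = \#\{\text{pseudo-reflections of }\calG\} = \sum_i(d_i - 1)$. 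Since $R^\calG \subseteq R^{\calG'}$, the ratio $\prod_i(1-t^{d_i})/\prod_i(1-t^{d_i'})$ is a polynomial with non-negative coefficients; together with $\sum_i d_i = \sum_i d_i'$ this forces the multisets $\{d_i\}$ and $\{d_i'\}$ to coincide, hence $|\calG| = |\calG'|$ and $\calG = \calG'$.

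I expect the main obstacle to be the algebraic-independence step in the first implication: making Chevalley's divisibility lemma do the work of converting the pseudo-reflection hypothesis into the non-existence of algebraic relations among the $g_i$. A secondary delicate point is the final Hilbert-series bookkeeping in the converse, which is precisely what separates ``$\calG$ contains pseudo-reflections'' from the stronger conclusion that $\calG$ is generated by them.
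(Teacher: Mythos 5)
The paper offers no proof of this statement: it is the classical Chevalley--Shephard--Todd theorem, quoted with citations to Shephard--Todd, Chevalley and Serre and used as a black box, so there is no in-paper argument to compare against. Your proposal reconstructs the standard classical proof---Chevalley's argument for ``generated by pseudo-reflections $\Rightarrow$ polynomial invariant ring'' (Reynolds operator, minimal generators, the divisibility lemma for a reflecting hyperplane, the syzygy lemma concluding $b_1 \in R^\calG_+ R$, and the contradiction via a minimal algebraic relation), and the Molien-series bookkeeping of Shephard--Todd/Solomon for the converse (the identities $\prod_i d_i = |\calG|$ and $\sum_i (d_i-1) = \#\{\text{pseudo-reflections}\}$, reduction to the normal reflection subgroup $\calG'$, and the Hilbert-series ratio argument forcing $\{d_i\}=\{d_i'\}$, which is correct: $\sum_i d_i = \sum_i d_i'$ gives $\deg Q = 0$ and $Q(0)=1$, so $Q=1$). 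This skeleton is sound. Two caveats of detail: the non-vanishing of some $\partial P/\partial u_j$ at $(g_1,\dots,g_m)$ follows from minimality of $\deg P$ together with $\mathrm{char}\,\K = 0$, not from Euler's identity---Euler's identity is what you apply to the $g_i$ at the very end (multiply the membership relations coming from the syzygy lemma by $x_j$ and sum over $j$) to contradict minimality of the generating set; and in that step you must take $P$ homogeneous for the weights $\deg u_i = \deg g_i$ and first pass to a minimal generating subset of the ideal of $R^\calG$ generated by the evaluated partials before the syzygy lemma applies. That bookkeeping is where the real technical work of the first implication lives and your outline only gestures at it, but these are omissions in a deliberately sketched plan rather than errors of route; the approach is the accepted one and, unlike the paper, you actually supply an argument.
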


\noindent This theorem, combined with Lemma \ref{lemma:weight} for \( W = (1, \dots, 1) \), provides the following degree bound.

\begin{lemma}
    Let \( h \in \mathbb{K}[x_1, \dots, x_n]^\calG = \mathbb{K}[g_1, \dots, g_n] \), and let \( f \)  be the polynomial in \(\mathbb{K}[u_1, \dots, u_n] \) be such that 
    \(
    f(g_1, \dots, g_n) = h.
    \)
    Let \( W' = (\deg(g_1), \dots, \deg(g_n)) \). Then, the degree of \( f \) satisfies
    \[
    \deg_{W'}(f) = \deg(h).
    \]
    As a consequence, we have the bound
    \(
    \deg(f) \leq \deg(h).
    \)
\end{lemma}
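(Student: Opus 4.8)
The plan is to derive this lemma directly from Lemma~\ref{lemma:weight} and the structure theorem for pseudo-reflection groups. First I would observe that by the Shephard--Todd--Chevalley theorem quoted just above, since $\calG$ is generated by pseudo-reflections, its invariant ring is a polynomial ring $\mathbb{K}[x_1,\dots,x_n]^\calG = \mathbb{K}[g_1,\dots,g_n]$ with $g_1,\dots,g_n$ algebraically independent and \emph{homogeneous} in the usual sense. Homogeneous with respect to the standard grading is exactly $W$-homogeneous for the weight vector $W = (1,\dots,1)$, and for this $W$ the $W$-degree coincides with the total degree; in particular $\deg_W(g_i) = \deg(g_i)$ and $\deg_W(h) = \deg(h)$.

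Next I would invoke Lemma~\ref{lemma:weight} with this choice $W = (1,\dots,1)$: the hypotheses are met because $g_1,\dots,g_n$ are $W$-homogeneous and algebraically independent, and $h \in \mathcal{S} = \mathbb{K}[g_1,\dots,g_n]$. The lemma then yields
\[
\deg_{W'}(f) = \deg_W(h) = \deg(h),
\qquad W' = (\deg_W(g_1),\dots,\deg_W(g_n)) = (\deg(g_1),\dots,\deg(g_n)),
\]
which is precisely the first claimed identity.

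Finally, for the consequence $\deg(f) \le \deg(h)$, I would compare the total degree of $f$ with its $W'$-degree. Writing $f = \sum_{\a} c_\a\, u_1^{a_1}\cdots u_n^{a_n}$, each nonzero term contributes total degree $a_1 + \cdots + a_n$ and $W'$-degree $a_1 w_1' + \cdots + a_n w_n'$ with each $w_i' = \deg(g_i) \ge 1$ (the $g_i$ are nonconstant since they are algebraically independent, so in fact $\deg(g_i)\ge 1$). Hence $a_1 + \cdots + a_n \le a_1 w_1' + \cdots + a_n w_n'$ termwise, and taking the maximum over all terms of $f$ gives $\deg(f) \le \deg_{W'}(f) = \deg(h)$.

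I do not anticipate a real obstacle here: the lemma is essentially a specialization of Lemma~\ref{lemma:weight}, and the only point requiring a word of care is the elementary comparison between the ordinary degree and the $W'$-degree, which relies solely on each weight $w_i' = \deg(g_i)$ being a positive integer (automatic, since basic invariants of a pseudo-reflection group are nonconstant homogeneous polynomials). One could also remark that the bound is tight exactly when some basic invariant has degree~$1$.
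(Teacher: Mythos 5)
Your proposal is correct and matches the paper's own argument, which simply combines the Shephard--Todd--Chevalley theorem with Lemma~\ref{lemma:weight} specialized to $W=(1,\dots,1)$. Your explicit termwise comparison showing $\deg(f)\le\deg_{W'}(f)$ because each weight $\deg(g_i)\ge 1$ is a detail the paper leaves implicit, but it is the intended justification.
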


\begin{example}
  Let us consider the polynomial \(h\) and the polynomials \( g_i \) from Section \ref{sec:ex}. The weight vector is given by  
  \[
  W' = (\deg(g_1), \deg(g_2), \deg(g_3)) = (1, 2, 3),
  \]
  and we have \( \deg(h) = 3 \). We have previously computed that  
  \[
  f = -\frac{1}{3}u_1^3 + u_1u_2 - u_1 + \frac{1}{3}u_3.
  \]
  Then, the weighted degree of \( f \) is  
  \(
  \deg_{W'}(f) = 3,
  \)
  which equals \( \deg(h) \). In addition,  \( \deg(f) = 3 \) (the standard degree, without weights), which is also equal to \( \deg_{W'}(f) \).
\end{example}

\begin{example}
  Let us consider \( h = x_1x_2x_3 \in \mathbb{K}[x_1, x_2, x_3]^{\mathcal{S}_3} \), a symmetric polynomial, and let 
  \[
  g_1 = x_1 + x_2 + x_3, \quad g_2 = x_1x_2 + x_2x_3 + x_1x_3, \quad g_3 = x_1x_2x_3.
  \]
  Then \( f = u_3 \) and \( (g_1, g_2, g_3) \) are \(W'\)-homogeneous polynomials with \( W' = (1, 2, 3) \). We have \( \deg_{W'}(f) = 3 = \deg(h) \), while \( \deg(f) = 1 \) (the standard degree, without weights).
\end{example}

Together with Theorem \ref{thm:main_2}, we have the following result regarding the complexity of computing \( f \) when \( h \) is a polynomial invariant under the action of a pseudo-reflection group.

\begin{theorem} \label{thm:spec}
Let \(\K\) be a field of characteristic zero and    let \(\calG\) be a pseudo-reflection subgroup of \( GL(n, \K) \), and let \( g_1, \dots, g_n \) be algebraically independent homogeneous polynomials such that 
    \[
    \mathbb{K}[x_1, \dots, x_n]^\calG = \mathbb{K}[g_1, \dots, g_n].
    \] 
    
    Then, for any \( h \in \mathbb{K}[x_1, \dots, x_n]^\calG \), one can compute the polynomial \( f \in \K[u_1, \dots, u_n] \) such that \( h = f(g_1, \dots, g_n) \) in  
    \[
    \softO\big((nL_1 + n^4 + L_2) {\mathcal{M}}(d, n) \big)
    \]
    operations in \( \K \), where \( L_1 \) and \( L_2 \) are the lengths of straight-line programs computing \( (g_1, \dots, g_n) \) and \( h \), respectively, and \( d = \deg(h) \).
\end{theorem}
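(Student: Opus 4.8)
\textbf{Proof plan for Theorem~\ref{thm:spec}.}
The plan is to derive this result as an immediate combination of the two main ingredients already established in the paper: the general complexity bound of Theorem~\ref{thm:main_2} (equivalently Theorem~\ref{main:thm}) and the degree bound for invariants under a pseudo-reflection group obtained just above. First I would invoke the Shephard--Todd--Chevalley theorem: since $\calG$ is a pseudo-reflection subgroup of $GL(n,\K)$, the invariant ring $\K[x_1,\dots,x_n]^\calG$ is a polynomial ring $\K[g_1,\dots,g_n]$ on $n$ algebraically independent homogeneous generators, so the hypotheses of Theorem~\ref{thm:main_2} are met for any $h$ in this ring. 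Thus there is a unique $f\in\K[u_1,\dots,u_n]$ with $h=f(g_1,\dots,g_n)$, and algorithm ${\sf Representation}$ computes it.

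Next I would control the degree parameter $\Delta$ appearing in Theorem~\ref{thm:main_2}. Here one applies Lemma~\ref{lemma:weight} with the trivial weight vector $W=(1,\dots,1)$, noting that every homogeneous $g_i$ is automatically $W$-homogeneous in the ordinary grading; this yields $\deg_{W'}(f)=\deg(h)=d$ with $W'=(\deg g_1,\dots,\deg g_n)$. Since all weights $\deg(g_i)$ are at least $1$, the ordinary degree of $f$ is bounded by its $W'$-degree, so $\deg(f)\le d$. Hence we may run ${\sf Representation}$ with degree bound $\Delta=d$ (or, using the adaptive doubling strategy described after Theorem~\ref{thm:main_2}, without knowing $d$ in advance, at no extra asymptotic cost).

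Finally I would substitute $\Delta=d$ into the complexity estimate of Theorem~\ref{thm:main_2}, giving a cost of
\[
\softO\big((nL_1+n^4+L_2)\,{\mathcal{M}}(d,n)\big)
\]
operations in $\K$, where $L_1,L_2$ are the SLP lengths for $(g_1,\dots,g_n)$ and $h$; this is exactly the claimed bound. There is essentially no hard step here: the theorem is a specialization, and the only minor point to be careful about is that the monotonicity of ${\mathcal{M}}(\cdot,n)$ (already used in the proof of Proposition~\ref{prop:newton}) lets one replace the true, possibly smaller, degree of $f$ by the bound $d$ without loss, and that the randomization caveat of Theorem~\ref{thm:main_2} carries over verbatim.
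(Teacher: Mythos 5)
Your proposal is correct and follows the paper's own route: the paper likewise obtains Theorem~\ref{thm:spec} by combining the Shephard--Todd--Chevalley characterization of pseudo-reflection invariant rings, the degree bound $\deg(f)\le\deg_{W'}(f)=\deg(h)$ from Lemma~\ref{lemma:weight} with $W=(1,\dots,1)$, and the complexity bound of Theorem~\ref{thm:main_2} with $\Delta=d$. Your additional remarks on monotonicity of $\mathcal{M}(\cdot,n)$ and on the randomization caveat carrying over are consistent with the paper and require no changes.
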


\section{Conclusions and Topics for Future Research}

In this paper, we presented an algorithm along with its complexity analysis for computing the representation \( f \) of a polynomial \( h \) in a subring generated by algebraically independent polynomials \( g_1, \dots, g_n \). The complexity of our algorithm is linear in the size of the input polynomials \( h \) and \( g_i \)'s, and polynomial in \( n \) when the degree of \( f \) is fixed. An important special case arises when \( h \) is a polynomial invariant under the action of a finite pseudo-reflection group (e.g., symmetric groups \( S_n \); Weyl groups of Lie algebras such as dihedral groups \( D_n \), signed symmetric groups \( B_n \), etc.). 

To the best of our knowledge, previous work has addressed only the case where \( h \) is a symmetric polynomial and the \( g_i \)'s are elementary symmetric polynomials or homogeneous symmetric, or power symmetric polynomials. It is worth noting that, in the case of symmetric polynomials, the algorithm of Bl{\"a}ser and Jindal \cite{BlaserJindal18} is faster than ours. However, our results apply to a significantly more general setting.

Many questions remain open. To recall, our main approach is based on Newton-Hensel lifting. The key technical result involves shifting and the use of a generic starting point (Lemma~\ref{lemma:ipen}) to ensure that a certain Jacobian matrix has full rank—an essential condition for the lifting process. This step represents the main computational cost in our algorithm. Therefore, one direction for future research is to investigate whether, under certain settings, faster algorithms can be developed. An example is the case of symmetric polynomials.

Furthermore, we also want to compute the representation in equation \eqref{eq:second-first} for any polynomial \( h \) that is invariant under the action of a finite matrix group \( \calG \). As mentioned in the introduction, previous work by Dahan, Schost, and Wu \cite{dahan2009evaluation} provides an approach, but it remains computationally expensive.

The problem of decomposing multivariate polynomials is one of the main motivations for this paper. The result of this paper  is considered as one of two key steps in developing an algorithm to decompose certain polynomials. Therefore, designing a complete algorithm for decomposing multivariate polynomials is a focus of our furture research.

Finally, in the context of polynomial system solving, while the output polynomial \( f \) has more structure, solving new polynomial systems in equation \eqref{eq:decom} could be more computationally expensive. Therefore, a detailed analysis of this approach is the subject of future work. Additionally, providing tight bounds on the degree of \( f \) remains an open problem. Using weighted degrees to target this purpose, as well as employing a weighted version of \( \mathcal{M}(\Delta, n) \) to improve complexity estimates, are potential directions for future research. 
Furthermore, as mentioned in the introduction, straight-line program (SLP) representation is utilized in polynomial system-solving algorithms using Newton-Hensel lifting techniques. Therefore, investigating an upper bound for an SLP representing \(f\) is one of our future research goals.
\section{Acknowledgements} The author thanks the anonymous referee for their valuable comments and suggestions for future work.

\balance

\bibliographystyle{ACM-Reference-Format}
\bibliography{main.bib}

\appendix
\section{An example} \label{sec:ex}
We illustrate the steps of our main algorithm (Algorithm~\ref{alg:representation}) with a concrete example.  

Let us consider
\[
(g_1, g_2, g_3) = (x_1 + x_2 + x_3,\ 
x_1^2 + x_2^2 + x_3^2,\ 
x_1^3 + x_2^3 + x_3^3).
\]
Then \(\mathcal{S} = \K[x_1, x_2, x_3]^{S_3}\), where \(S_3\) is the symmetric group on three elements. This is the subring of \(\K[x_1, x_2, x_3]\) consisting of all symmetric polynomials in three variables. We consider the polynomial
\[
h = x_1^3 + x_2^3 + x_3^3 - 2x_1x_2x_3 - x_1 - x_2 - x_3,
\]
which lies in \(\mathcal{S}\) and has total degree \(d = 3\).  Here we also know that the degree of the output polynomial $\ell$ is bounded by the degree of the polynomial $h$.

The \texttt{Representation} algorithm proceeds as follows.
  First, we take a generic point
  $\balpha = (\alpha_1, \alpha_2, \alpha_3) = (4,6,0)$, then
   $$(c_1, c_2, c_3) = (g_1(\balpha),g_2(\balpha),
  g_3(\balpha)) =  (10, 52, 280).$$ 
  Then the polynomials $\Q$ are   
  \[
  q_1 = x_1+x_2+x_3 - u_1- 10, \quad q_2 = x_1^2+x_2^2+x_3^2 - u_2 - 52,
  \]
  and 
  \[
  q_3 = x_1^3+x_2^3+x_3^3 - u_3 - 280.
  \]

  Using the ${\sf Lifting}$ procedure with the input $\Q$,
  $(4,6,0)$, and $3$ as the 
  input, we obtain 
    \begin{multline*}   
    v_1 = \frac{-1}{512}u_1u_2u_3 - \frac{11}{96}u_1^3
  +\frac{33}{4096}u_2^3-\frac{1}{55296}u_3^3+ \cdots  \\ + \frac{3}{8} u_2
        - \frac{1}{24} u_3+ 4,
    \end{multline*}
        \begin{multline*}   
    v_2 =  \frac{11}{2592}u_1u_2u_3 + \frac{1}{24}u_1^3
  -\frac{197}{31104}u_2^3+\frac{29}{1679616}u_3^3 + \cdots  \\
      - \frac{1}{6} u_2 + \frac{1}{36} u_3
      + 6, {\rm ~and}
    \end{multline*}
        \begin{multline*}   
    v_3 =\frac{-95}{41472}u_1u_2u_3 +\frac{7}{96}u_1^3
  -\frac{1715}{995328}u_2^3 +  \frac{11}{13436928}u_3^3+ \cdots  \\ + u_1
      - \frac{5}{24}u_2 + \frac{1}{72}u_3.
    \end{multline*}
Then we substitute $(x_1, x_2, x_3) = (v_1, v_2, v_3)$
into $h$ and truncate the result at degree $3$ to obtain 
\[
\ell = -\frac{1}{3}u_1^3 - 10u_1^2 + u_1u_2 - 49u_1 + 10u_2 + \frac{1}{3}u_3 + 270;
\]
then finally 
\[
{f} =  \ell(u_1-10, u_2-52, u_3-280)
= -\frac{1}{3}u_1^3  + u_1u_2 - u_1 + \frac{1}{3}u_3.
\]

\end{document}